\algrenewcommand\algorithmicindent{0.35em}%
\newcommand\skeygc{\textsc{Kre-Ygc}}
\newcommand\skeahel{\textsc{Kre-Ahe1}}
\newcommand\skeahed{\textsc{Kre-Ahe2}}
\newcommand\skeshe{\textsc{Kre-She}}
\newcommand\ygc{\textsc{Kre-Ygc}}
\newcommand\ahel{\textsc{Kre-Ahe1}}
\newcommand\ahed{\textsc{Kre-Ahe2}}
\newcommand\she{\textsc{Kre-She}}
\newcommand\skshare{sks}
\newcommand\partdec{Decp}
\newcommand\findec{Decf}
\newcommand\dgkdec{\textsc{DgkDec}}
\newcommand\dgkeva{\textsc{DgkEval}}
\newcommand\paired{\textsc{Paired}}
\newcommand\dgkcompare{\textsc{DgkCompare}}
\newcommand\kre{\mathrm{KRE}}
\newcommand\simulator{\mathrm{SIM}}
\newcommand\ctxtcmt{\alpha} % ciphertext component
\newcommand\ctxtlen{\kappa} % ciphertext length
\newcommand\secprm{\lambda}
\newcommand\inputlen{\mu}
\newcommand\bb{a} %blinded bit
\newcommand{\func}[2]{\mathcal{#1}_{\mathsf{#2}}}  %\func{F}{PDTE}
\newcommand{\view}[2]{\mathsf{View}_{\mathsf{#1}}^{#2}}
\newcommand\krsym{$k\textsuperscript{th}$-ranked}
\newcommand\krsymt{$k\textsuperscript{th}$-Ranked}
\newcommand{\bitrep}[1]{#1^{\mathsf{b}}}
\newcommand{\ctxtrep}[1]{\lsem #1 \rsem}
\newcommand\sheadd{\textsc{SheAdd}} % she add
\newcommand\shemul{\textsc{SheMult}} % she mult
\newcommand\shecmp{\textsc{SheCmp}} % she cmp
\newcommand\shefadder{\textsc{SheFadder}} % she full adder
\newcommand\sheequ{\textsc{SheEqual}} % she add
\algnewcommand\algorithmicforeach{\textbf{for each}}
\begin{document}

\title{Secure Computation of the \krsymt~Element in a Star Network}

% \author[1]{Anselme Tueno}

% \author[2]{Florian Kerschbaum}

% \author[3]{Stefan Katzenbeisser}

% \author[4]{Yordan Boev}

% \author[5]{Mubashir Qureshi}

% \affil[1]{SAP SE, E-mail: anselme.kemgne.tueno@sap.com}

% \affil[2]{University of Waterloo, E-mail: florian.kerschbaum@uwaterloo.ca}

% \affil[3]{University of Passau, E-mail: Stefan.Katzenbeisser@uni-passau.de}

% \affil[4]{SAP SE, E-mail: iordan.boev@gmail.com}

% \affil[5]{SAP SE, E-mail: mubashir.mehmood.qureshi01@sap.com}

\author{Anselme Tueno\inst{1} \and
Florian Kerschbaum\inst{2} \and
Stefan Katzenbeisser\inst{3} \and
Yordan Boev \inst{4} \and
Mubashir Qureshi \inst{5}}

\institute{SAP SE, Germany
\email{anselme.kemgne.tueno@sap.com} 
\and
University of Waterloo, Canada
\email{florian.kerschbaum@uwaterloo.ca} 
\and
University of Passau, Germany
\email{Stefan.Katzenbeisser@uni-passau.de}
\and
SAP SE, Germany
\email{iordan.boev@gmail.com} 
\and 
SAP SE, Germany
\email{mubashir.mehmood.qureshi01@sap.com} 
}

\maketitle

\begin{abstract}
We consider the problem of securely computing the \krsym~element in a sequence of $n$ private integers distributed among $n$ parties. 
The \krsym~element (e.g., minimum, maximum, median) is of particular interest in benchmarking, which allows a company to compare its own
key performance indicator to the statistics of its peer group. The individual integers are sensitive data, yet the \krsym~element is of mutual interest to the parties. Previous secure computation protocols for the \krsym~element require a communication channel between each pair of parties.  
They do not scale to a large number of parties as they are highly interactive resulting in longer delays. Moreover, they are difficult to deploy as special arrangements are required between each pair of parties to establish a secure connection. A server model naturally fits with the client-server architecture of Internet applications in which clients are connected to the server and not to other clients. It can simplify  secure computation by reducing the number of rounds, and as a result, improve its performance and scalability.
In this model, there are communication channels only between each client and the server, while only clients provide inputs to the computation. Hence, it is a centralized communication pattern, i.e., a star network. We propose different approaches for privately computing the \krsym~element in the server model, using either garbled circuits or threshold homomorphic encryption. Our schemes have a constant number of rounds and can compute the \krsym~element within seconds for up to 50 clients in a WAN. %setting.
\end{abstract}

%\keywords{\krsymt~Element, Garbled Circuit, Homomorphic Encryption, Threshold Secret Sharing, Secure Multiparty Computation}

\section{Introduction}
Given $n$ parties each holding a private integer, we consider the problem of securely computing the \krsym~element (KRE) of these $n$ integers. This is a secure multiparty computation (SMC) where several parties wish to compute a publicly known function on their private input while revealing only the output of the computation to a designated subset of parties and nothing else. The computation of the \krsym~element is of particular interest in settings such as collaborative benchmarking, where the individual inputs are sensitive data, yet  the \krsym~element is of mutual interest to all parties \cite{AggarwalMP.2010, Kerschbaum.2008J}. %\\

%\noindent\textbf{Benchmarking.}
\paragraph{\textbf{Benchmarking.}}
A key performance indicator (KPI) is a statistical quantity measuring the performance of a business process.
Benchmarking is a management process where a company compares its KPI to the statistics of the same KPIs of a group of competitors from a peer group. 
Examples of KPIs from different company operations are make cycle time (manufacturing), cash flow (financial) and employee fluctuation rate (human resources).
A peer group is a group of similar companies, usually competitors, wanting to compare against each other. Examples formed along different characteristics include car manufacturers (industry sector), Fortune 500 companies in the United States (revenue and location). 
A big challenge for benchmarking is that KPIs are sensitive and confidential, even within a single company~\cite{Kerschbaum.2008J}. %\\

%\noindent\textbf{Confidentiality.}
\paragraph{\textbf{Confidentiality.}} 
Confidentiality is of the utmost importance in benchmarking, since KPIs
allow the inference of sensitive information. Companies are therefore hesitant to share their business performance data due to the risk of losing a competitive advantage \cite{Kerschbaum.2008J}. The confidentiality issue can be addressed using SMC~\cite{Yao.1982, GoldreichMW.1987, Ben-OrGW.1988}, which guarantees that no party will learn more than the output of the protocol, i.e., the other parties’ inputs remain confidential. %\\

%\noindent\textbf{Communication Model.}
\paragraph{\textbf{Communication Model.}}
There exist several secure protocols that can be used for keeping KPIs confidential while comparing them~\cite{Yao.1982, GoldreichMW.1987, Ben-OrGW.1988, AggarwalMP.2010}. They require
a communication channel between each pair of input parties. We will refer to this approach as the \emph{standard model}. Protocols in the standard model do not scale easily to a large number of parties as they require a communication channel between any pair of parties and are highly interactive, resulting in high latency. Moreover, they are difficult to deploy as special arrangements are required between each pair of parties to establish a secure connection~\cite{CatrinaK2008}. 
A promising approach for overcoming these limitations is to use the help of a small set of untrusted non-colluding servers. We will therefore refer to it as the \emph{server model}. In this model, the servers make their computational resources available for the computation, but have no input to the computation and receive no output \cite{Kerschbaum.2008J, KamaraMR2012}. For example, Jakobsen et al. \cite{JakobsenNO2014} propose a framework in which the input parties (the clients) delegate the computation to a set of untrusted workers.
Relying on multiple non-colluding servers requires a different business model for the service provider of a privacy-preserving service. The service provider has to share benefits with an almost equal peer offering its computational power \cite{Kerschbaum.2009}.
We therefore use a communication model consisting of clients (with private inputs) and a server. In this model, the server provides no input to the computation and does not learn the output, but makes its computational resources available to the clients \cite{Kerschbaum.2008J, KamaraMR2012}. 
Moreover, there are communication channels only between each client and the server. Hence, it is a centralized communication pattern, i.e., a star network.
As a result, the clients will only communicate with the server, but never directly amongst each other.
This model naturally fits with the client-server architecture of Internet applications and allows a service provider to play the server's role. It can simplify the secure  protocol, and improve its performance and scalability~\cite{CatrinaK2008, KamaraMR2012, KamaraMR11}. %\\
 \begin{table}[tbp]
	\centering
	\begin{tabular}{|c|l|l|}
	\hline
	\multicolumn{1}{|c|}{\scriptsize{Symbol}}  & {\scriptsize{Interpretation}}	\\  \hline
	\multicolumn{1}{|c|}{\scriptsize{$\inputlen$}}  & {\scriptsize{Bitlength of inputs}} \\ 
	\multicolumn{1}{|c|}{\scriptsize{$n$}}  & {\scriptsize{Number of clients}} \\ 
	\multicolumn{1}{|c|}{\scriptsize{$t$}}  & {\scriptsize{Secret sharing threshold,} $t\leq n$} \\ 
	\multicolumn{1}{|c|}{\scriptsize{$\ctxtlen$}}  & {\scriptsize{Bitlength of asymmetric ciphertext}} \\ 
	\multicolumn{1}{|c|}{\scriptsize{$\secprm$}}  & {\scriptsize{Security parameter}} \\ 
	\multicolumn{1}{|c|}{\scriptsize{$x_1, \ldots, x_{n}$}}  & {\scriptsize{Client's inputs}} \\ 
	\multicolumn{1}{|c|}{\scriptsize{$x_{i}^{b} = x_{i\inputlen} \ldots x_{i1}$}}  & {\scriptsize{Bit representation of $x_i$ with most significant bit (MSB) $x_{i\inputlen}$ }} \\ 
	\multicolumn{1}{|c|}{\scriptsize{$|y|$}}  & {\scriptsize{Bitlength of integer $y$, e.g., $|x_i| = \inputlen$}} \\ 
	\multicolumn{1}{|c|}{\scriptsize{$\lsem x_i \rsem$}}  & {\scriptsize{$x_i$'s ciphertext under public key $pk$} } \\ 
	\multicolumn{1}{|c|}{\scriptsize{$\lsem x_i \rsem_j$}}  & {\scriptsize{$x_i$'s ciphertext under public key $pk_j$} } \\ 
	\multicolumn{1}{|c|}{\scriptsize{$\lsem x_{i}^{b} \rsem$}}  & {\scriptsize{Bitwise encryption $(\lsem x_{i\inputlen}\rsem, \ldots, \lsem x_{i1} \rsem)$}} \\
	\multicolumn{1}{|c|}{\scriptsize{$i \stackrel{\$}{\leftarrow} \mathbb{S}$}}  & {\scriptsize{Choose a random element $i$ in set $\mathbb{S}$}} \\
	\multicolumn{1}{|c|}{\scriptsize{$\{i_1, \ldots, i_t\} \stackrel{\$}{\leftarrow} \mathbb{S}$}}  & {\scriptsize{Choose $t$ random distinct elements in $\mathbb{S}$}} \\
	\multicolumn{1}{|c|}{\scriptsize{$\mathfrak{S}_n$}}  & {\scriptsize{Set of all permutations of $\{1, \ldots, n\}$}} \\
	\hline
	\end{tabular}
	\captionsetup{justification=centering}
	\caption{Notations.}
	\label{Notation_Table}
 \end{table}
 \begin{table}[tbp]
 	\centering
	\begin{tabular}{| l | l | l | l | l | l |}
		\hline
			\multicolumn{1}{ |l|}{}                    & \scriptsize{\ygc}    &	\scriptsize{\ahel} &	\scriptsize{\she} & \scriptsize{\cite{AggarwalMP.2004}} \\ 
													& &    	\scriptsize{\ahed} &  &\\ \hline
		\multicolumn{1}{ |l|}{\scriptsize{\# Rounds}}	           & \scriptsize{4} & \scriptsize{4} & \scriptsize{2} & \scriptsize{$O(\mu)$}\\ 
		\multicolumn{1}{ |l|}{\scriptsize{Collusion-resis.}} & \scriptsize{$n-1 ~|~ 0$} &	\scriptsize{$t-1 ~|~ t$} & \scriptsize{$t-1 ~|~ t$} & \scriptsize{$t-1 ~|~$n/a}\\ 
		\multicolumn{1}{ |l|}{\scriptsize{Fault-tolerance}}	     & \scriptsize{0} & 	\scriptsize{$n-t$} & \scriptsize{$n-t$} & \scriptsize{$n-t$}\\ \hline
	\end{tabular}
	\caption{Schemes' properties: \textnormal{The collusion row refers to the number of parties that can collude - server excluded | server included - without breaking the privacy of non-colluding clients. The fault-tolerance row refers to the number of parties that can fail without preventing the protocol to properly compute the intended functionality.}}
	\label{KRE_Comparison}
 \end{table}
  \begin{table}[tbp]
	\centering
	\begin{tabular}{| l | l | l | l | l | l | l |}
		\hline
		\multicolumn{1}{ |l|}{}                  & \multicolumn{2}{ l|}{ \scriptsize{\ygc}} & \scriptsize{\ahel} &	\scriptsize{\she} & \scriptsize{\cite{AggarwalMP.2004}} \\ \cline{2-3}
		%																				 &              &  & \scriptsize{\ahed} &  \\ 
		\multicolumn{1}{|l|} {}                      &  \scriptsize{sym.}      &      \scriptsize{asym.}               & \scriptsize{\ahed}     &    &  \\ \cline{1-6} 
		\multicolumn{1}{ |l|}{\scriptsize{CC-C}} & \scriptsize{$O(n\inputlen)$} & \scriptsize{$O(n)$}           & \scriptsize{$O(n\inputlen)$}   & \scriptsize{$O(\inputlen)$}  & \scriptsize{$O(n\inputlen^2)$}        \\ 
		\multicolumn{1}{ |l|}{\scriptsize{CC-S}} & \scriptsize{$O(n^2\inputlen)$} & \scriptsize{$O(n\log n)$} & \scriptsize{$O(n^2\inputlen)$}   & \scriptsize{$O(n^2\inputlen \log \inputlen)$}  & \scriptsize{n/a}         \\ 
		\multicolumn{1}{ |l|}{\scriptsize{BC-C}} & \scriptsize{$O(n\inputlen\secprm) $} & \scriptsize{$O(n\ctxtlen)$} & \scriptsize{$O(n\inputlen\ctxtlen)$}   & \scriptsize{$O((\inputlen + n)\ctxtlen)$} & \scriptsize{$O(n\inputlen^2\secprm)$}          \\ 
		\multicolumn{1}{ |l|}{\scriptsize{BC-S}} & \scriptsize{0} &\scriptsize{$O(n^2\ctxtlen)$}              & \scriptsize{$O(n^2\inputlen\ctxtlen)$}          & \scriptsize{$O(n\ctxtlen)$} & \scriptsize{n/a}  \\ \hline
	\end{tabular}
	\caption{Schemes' Complexity: \textnormal{Rows CC-C/S and BC-C/S denote the computation and communication (bit) complexity for 
	each client and the server, respectively. The columns \enquote{sym.} and \enquote{asym.} denote symmetric and asymmetric operations in \ygc.}}
	\label{KRE_Comparison_Compl}
 \end{table}
% 
%\noindent\textbf{Contribution.}
\paragraph{\textbf{Contribution.}} 
In summary, we propose different approaches for securely computing the \krsym~element (KRE) in a star network using either garbled circuits (GC) or additive homomorphic encryption (AHE) or somewhat homomorphic encryption (SHE): 
\begin{itemize}
	\item Our first scheme \skeygc~ uses Yao's GC \cite{LindellP.2009, BellareHR.2012} to compare clients' inputs.
	\item Our second scheme \skeahel~ is based on threshold AHE. We first propose a modified variant of the Lin-Tzeng comparison protocol \cite{LinT05}. 
	The server then uses it to compare inputs encrypted with AHE.
	\item In our third scheme \skeahed, we continue with threshold AHE, however, we perform the comparison using the DGK protocol \cite{DamgardGK.2007}.
	\item The fourth scheme \skeshe~ is based on SHE and allows the server to non-interactively compute the KRE such that the clients only interact to jointly decrypt the result.
\end{itemize}

\noindent We compare  the approaches in Tables \ref{KRE_Comparison} and \ref{KRE_Comparison_Compl} using the following measures:
\begin{itemize}
	\item Number of rounds: In contrast to \cite{AggarwalMP.2004}, all our protocols have a constant number of rounds.  %We want to keep the number of rounds low, as a protocol with a large number of rounds results in longer latency.
	\item Collusion-resistance: This is a protocol property that is measured by the number of parties that can collude without violating the privacy of the non-colluding ones. In \skeygc~ a collusion with the server completely breaks the security, while \skeahel~ and \skeahed~ can tolerate the collusion of several clients with the server as long as the number of colluding clients is smaller than a threshold $t$. If the server does not collude, then \skeygc~ can tolerate up to $n-1$ colluding clients. Aggarwal et al.'s scheme \cite{AggarwalMP.2004} is collusion-resistant if implemented with a threshold scheme.
	%As the server is a regular party and a single-point of failure at the same time, we consider collusion with the server and without the server.
	\item Fault-tolerance: It is a protocol property that is measured by the number of parties that can fail without preventing the protocol to properly compute the intended functionality. Our server model can only tolerate clients' failure. \skeygc~ is not fault-tolerant while \skeahel~ and \skeahed~ can tolerate failure of up to $n-t$ clients. Aggarwal et al.'s scheme \cite{AggarwalMP.2004} is fault-tolerant if implemented with a threshold scheme.
	%As no computation happens without the server in our model, we only consider clients' failure.
	\item Complexity: This refers to the asymptotic computation complexity as well as the  communication complexity. A summary is illustrated in Table \ref{KRE_Comparison_Compl}. We provide a detailed analysis in Appendix \ref{complexity_analysis}.   %For more details, we refer to Appendix \ref{complexity_analysis}. 
\end{itemize}

%\noindent\textbf{Structure.} 
\paragraph{\textbf{Structure.}}
The remainder of the paper is structured as follows. We begin by presenting related work in Section \ref{Related_work} and some 
preliminaries in Section \ref{Preliminaries}. We present our security model in Section \ref{Security_Definition} and a technical overview in Section 
\ref{Technical_Overview}.  The different approaches are presented in Sections \ref{GC_Secure_KRE} to \ref{SHE_Secure_KRE}. We discuss some implementation details and evaluation results in Section \ref{Evaluation}, before concluding our work in Section \ref{Conclusion}.  We provide further details such as security proofs and complexity analysis in the appendix.

\section{Related Work}
\label{Related_work}

Our work is related to secure multiparty computation (SMC). There are several generic SMC protocols \cite{DamgardKLPSS.2013, KellerOS.2016, Ben-OrGW.1988, CramerDN2001} that can be used to compute the \krsym~element of the union of $n$ private datasets. In practice, specialized protocols offer better performance as they use domain knowledge of the required functionality to optimize the secure protocol.
Aggarwal et al.~\cite{AggarwalMP.2010} introduced the first specialized protocol for computing the \krsym~element. Their multiparty protocol performs a binary search in the input domain resulting in $O(\inputlen)$ comparisons and, hence, requiring $O(\inputlen)$ sequential rounds. Each round requires a secure computation that performs two summations with complexity $O(n\inputlen)$ and two comparisons with complexity $O(\inputlen)$. As a result each client requires $O(n\inputlen^2+\inputlen^2)$ operations and sends $O(n\inputlen^2+\inputlen^2)$ bits.
Our protocols perform $O(n^2)$ comparisons, that can be executed in parallel, and have a constant number of rounds. A summary of the complexity of our schemes is illustrated in Table \ref{KRE_Comparison_Compl}.

The server model for SMC was introduced in \cite{FeigeKN1994}. Kerschbaum \cite{Kerschbaum.2009} proposed an approach allowing a service provider to offer a SMC service by himself. The cryptographic study of the server model was initiated in \cite{KamaraMR2012, KamaraMR11}. They all provide a generic solution for SMC while our approaches propose specialized protocol for the \krsym~element. 
The computation of the \krsym~element is also addressed in \cite{BlassK18,BlassK19} where the server is replaced by a blockchain. While the first one \cite{BlassK18} relies on Fischlin's comparison protocol \cite{Fischlin.2001}, the second one \cite{BlassK19} relies on the DGK comparison protocol \cite{DamgardGK.2007}.  The technical difficulty relies in the fact that parties must prove correct execution of the protocol which is done using zero-knowledge proofs resulting in maliciously secure protocols. These protocol requires only 3 rounds of computation, however they leak the order of the inputs to the parties.

\section{Preliminaries}
\label{Preliminaries} 
%In this section, we review some relevant building blocks of our protocols.
A Garbled Circuit (GC) \cite{LindellP.2009, EjgenbergFLL.2012, BellareHR.2012, ZahurRE.2015} can be used to execute any function privately between two parties.
To evaluate a function $f$ on input $x_i, x_j$, a garbling scheme $(F, e) \leftarrow Gb(1^{\secprm}, s, f)$ takes a security parameter $\secprm$, a random seed $s$, a Boolean encoding of $f$ and outputs a garbled circuit $F$ and an encoding string $e$ that is used to derive corresponding garbled inputs $\bar{x}_i, \bar{x}_j$ from $x_i, x_j$, i.e. there is a function $En$ such that $\bar{x}_i \leftarrow En(e, x_i)$ and $\bar{x}_j \leftarrow En(e, x_j)$. The garbling scheme is correct if $F(\bar{x}_i, \bar{x}_j) = f(x_i, x_j)$.

A homomorphic encryption (HE) allows computations on ciphertexts by generating an encrypted result whose decryption matches the result of a function on the  plaintexts. A HE scheme consists of the following algorithms:

\begin{itemize}
\item $pk, sk, ek \leftarrow KeyGen(\secprm)$: This probabilistic algorithm takes a security parameter $\secprm$ and outputs 
public, private and evaluation key $pk$, $sk$ and $ek$.
\item $c \leftarrow Enc(pk, m)$: This probabilistic algorithm takes $pk$ and a message $m$ and outputs a ciphertext $c$. We will denote $Enc(pk, m)$ by $ \lsem m \rsem$ (see Table \ref{Notation_Table}).
\item $c \leftarrow Eval(ek, f, c_1, \ldots, c_n)$: This probabilistic algorithm takes $ek$, an $n$-ary function $f$ and $n$ ciphertexts $c_1, \ldots c_n$ and outputs a ciphertext $c$.
\item $m' \leftarrow Dec(sk, c)$: This deterministic algorithm takes $sk$ and a ciphertext $c$ and outputs a message $m'$.
\end{itemize}

\noindent We require IND-CPA and the following correctness conditions $\forall m_1, \ldots, m_n$:
\begin{itemize}
	\item $Dec(sk, Enc(pk, m_i)) = Dec(sk, \lsem m_i \rsem) = m_i,$
	%\item $Dec(sk, Eval(ek, f, \lsem m_1 \rsem, \ldots, \lsem m_n \rsem)) = Dec(sk, \lsem f(\vec{m}) \rsem)$ where $\vec{m} = m_1, \ldots, m_n$.
	\item $Dec(sk, Eval(ek, f, \lsem m_1 \rsem, \ldots, \lsem m_n \rsem)) = Dec(sk, \lsem f(m_1, \ldots, m_n) \rsem)$.
\end{itemize}
If the scheme supports only addition, then it is \emph{additively homomorphic}. Schemes such as \cite{Paillier.1999, Koblitz1987}
are additively homomorphic and have the following properties:
\begin{itemize}
	\item Addition: $\forall m_1, m_2, \lsem m_1 \rsem \cdot \lsem m_2 \rsem = \lsem m_1 + m_2 \rsem$,
	\item Multiplication with plaintext: $\forall m_1, m_2, \lsem m_1 \rsem ^{ m_2} = \lsem m_1 \cdot m_2 \rsem$,
	\item Xor: $\forall a, b \in \{0,1\}, \textsc{Xor}(\lsem a \rsem, b) = \lsem a \oplus b \rsem = \lsem 1 \rsem^{b} \cdot \lsem a \rsem^{(-1)^b}$.
\end{itemize}

\noindent A Threshold Homomorphic Encryption (THE) \cite{BonehGGJKRS18, CramerDN2001} allows to share the private key to the parties using a threshold secret sharing scheme such that a subset of parties is required for decryption. Hence, instead of $sk$ as above, the key generation outputs a set of shares $\mathbb{SK} = \{\skshare_1, \ldots, \skshare_n\}$ which are distributed to the clients. The decryption algorithm is replaced by the following algorithms:
\begin{itemize}
\item $\tilde{m}_i \leftarrow \partdec(\skshare_i, c)$: The probabilistic partial decryption algorithm takes a ciphertext $c$ and a share $\skshare_i \in \mathbb{SK}$ of the private key and outputs $\tilde{m}_i$.
\item $m' \leftarrow \findec(\mathbb{M}_t)$: The deterministic final decryption algorithm takes a subset $\mathbb{M}_t =\{\tilde{m}_{j_1}, \ldots, \tilde{m}_{j_t}\} \subseteq \{\tilde{m}_1, \ldots, \tilde{m}_n \}$ of partial decryption shares and outputs a message $m'$.
\end{itemize}
We refer to it as \emph{threshold decryption}. It is correct if for all $\mathbb{M}_t =\{\tilde{m}_{j_1}, \ldots, \tilde{m}_{j_t}\}$ such that $|\mathbb{M}_t| \geq t$ and $\tilde{m}_{j_i} = \partdec(\skshare_{j_i}, \lsem m \rsem)$, it holds $m = \findec(\mathbb{M}_t)$.

\noindent When used in a protocol, we denote by \emph{combiner} the party which is responsible to execute algorithm $\findec()$. Depending on the protocol, the combiner can be any party. It receives a set $\mathbb{M}_t = \{\tilde{m}_{j_1}, \ldots, \tilde{m}_{j_t}\}$ of ciphertexts, runs $m' \leftarrow \findec(\mathbb{M}_t)$ and publishes the result or move to the next step of the protocol specification. 
 
\section{Security Definition}
\label{Security_Definition}
This section provides definitions related to our model and security requirements.
We start by defining the \krsym~element of a sequence of integers. 
\begin{definition}
\label{definition_KRE}
Let $\mathbb{X} = \{x_{1}, ..., x_{n}\}$ be a set of $n$ distinct integers and $\tilde{x}_1, \ldots, \tilde{x}_n$ be the corresponding sorted set, i.e., $\tilde{x}_1 \leq \ldots \leq \tilde{x}_n$, and $\mathbb{X} = \{\tilde{x}_1, \ldots, \tilde{x}_n\}$. The \emph{rank} of an element $x_i \in \mathbb{X}$ is $j$, such that $x_i = \tilde{x}_j$. The \emph{\krsym~element ($\kre$)} is the element $\tilde{x}_{ k }$ with rank $k$.
\end{definition}
If the rank is $k= \left\lceil  \frac{n}{2} \right\rceil$ then the element is called \textit{median}. If $k=1$ (resp. $k=n$) then the element is called \emph{minimum} (resp. \emph{maximum}).

\begin{definition}
Let $C_1, \ldots, C_n$ be $n$ clients each holding a private $\inputlen$-bit integer $x_1, \ldots, x_n$ and $S$ be a server which has no input. Our \emph{ideal functionality} $\func{F}{\kre}$ receives $x_1, \ldots, x_n$ from the clients, computes the $\kre$ $\tilde{x}_k$ and outputs $\tilde{x}_k$ to each client $C_i$. Moreover, $\func{F}{\kre}$ outputs a \emph{leakage} $\mathcal{L}_i$ to each $C_i$ and $\mathcal{L}_S$ to $S$.
\end{definition}

\noindent The leakage is specific to each protocol and contains information such as $n$, $t$, $\secprm$, $\ctxtlen$, $\inputlen$ (see Table \ref{Notation_Table}). It can be inferred from the party's view which is all that the party is allowed to learn from the protocol execution. In case of limited collusion (i.e., the number of colluding parties is smaller than a given threshold as given in Table \ref{KRE_Comparison}) additional leakage might include comparison results between some pair of inputs or the rank of some inputs.

\begin{definition}
\label{definition_view}
The \emph{view} of the $i$-th party during an execution of the protocol on input $\vec{x} = (x_1, \ldots, x_n)$ is denoted by:
$$ \view{i}{}(\vec{x}) = \{x_i, r_i, m_{i1}, m_{i2}, \ldots\},$$
where $r_i$ represents the outcome of the $i$-th party's internal coin tosses, and $m_{ij}$ represents the $j$-th message it has received.
\end{definition}
Since the server is a party without input, $x_{i}$ in its view will be replaced by the empty string.

We say that two distributions $\mathcal{D}_1$ and $\mathcal{D}_2$ are computationally indistinguishable (denoted $\mathcal{D}_1 \stackrel{c}{\equiv} \mathcal{D}_1$) if no probabilistic polynomial time algorithm can distinguish them except with negligible probability.

In this paper, we assume that parties follow the protocol specification, but the adversary keeps a record of all messages received by corrupted parties and tries to infer as much information as possible. Our adversary is, therefore, semi-honest. SMC security requires that what a party can learn from the protocol execution, can be inferred from its input and output only. The protocol is said secure if for each party, one can construct a simulator that given only the input and the output can generate a distribution that is computationally indistinguishable to the party's view.

\begin{definition}
\label{definition_SH_security}
Let 
$\func{F}{\kre}: (\{0,1\}^{\inputlen})^{n} \mapsto \{0,1\}^{\inputlen}$
be the functionality that takes $n$ $\inputlen$-bit inputs $x_1, \ldots, x_n$  and returns their KRE.  
Let $I = \{i_1, \ldots, i_t\} \subset \{1, \ldots, n+1\}$ be a subset of indexes of corrupted parties (Server's input $x_{n+1}$ is empty), $\vec{x} = (x_1, \ldots, x_n)$ and
$$\view{I}{}(\vec{x}) = (I, \view{i_1}{}(\vec{x}), \ldots, \view{i_t}{}(\vec{x})).$$
A protocol \emph{$t$-privately computes $\func{F}{\kre}$ in the semi-honest model} if there exists a polynomial-time simulator $\simulator$ such that: $\forall I, |I| = t$ and $\mathcal{L}_{I} = \bigcup_{i\in I} \mathcal{L}_i$, it holds:
$$\simulator(I, (x_{i_1}, \ldots, x_{i_t}), \func{F}{\kre}(x_1, \ldots, x_n), \mathcal{L}_{I}) \stackrel{c}{\equiv} \view{I}{}(x_1, \ldots, x_n).$$
\end{definition}

\section{Technical Overview}
\label{Technical_Overview}

In an initialization phase, clients generate and exchange necessary cryptographic keys through the server.
We assume the existence of a trusted third party (e.g., a certificate authority) which certifies public keys or generates keys for a threshold cryptosystem. Moreover the trusted third party is not allowed to take part in the main protocol or to collude with any party including the server.  
We stress that the initialization phase is run once and its complexity does not depend on the functionality that we want to compute. In the following, we therefore focus on the actual computations.

We determine the KRE in the main protocol by computing the rank of each $x_i$ and selecting the right one. To achieve that, we compare pairs of inputs $(x_i, x_j), 1 \leq i,j, \leq n$ and denote the result by a comparison bit $b_{ij}$. % (Equation \ref{comparison_bit}). 

\begin{definition}
\label{comparison_bit_def}
Let $x_i, x_j, 1 \leq i,j, \leq n$, be integer inputs of $C_i, C_j$. Then the \emph{comparison bit} $b_{ij}$ of the pair $(x_i, x_j)$ is defined as 1 if $x_i \geq x_j$ and 0 otherwise.
The computation of $x_i \geq x_j$ is distributed and involves $C_i, C_j$, where they play different roles, e.g., generator and evaluator.
Similar to the functional programming notation of an ordered pair, 
we use \emph{head} and \emph{tail} to denote $C_i$ and $C_j$. 
\end{definition}

\noindent For each input $x_i$, we then add all bits $b_{ij}, 1\leq j \leq n$ to get its rank $r_i$.

\begin{lemma}
\label{rank_computation_lemma}
Let $x_1, \ldots, x_n$ be $n$ distinct integers, and let $r_1, \ldots, r_n \in\{1, \ldots, n\}$ be their corresponding ranks and $b_{ij}$ the comparison bit for $(x_i, x_j)$.
It holds $r_i = \sum_{j=1}^{n} b_{ij}.$
\end{lemma}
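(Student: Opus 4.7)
The plan is to show that both sides of the claimed equality count the same quantity, namely the number of indices $j \in \{1, \ldots, n\}$ such that $x_j \leq x_i$.

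First I would handle the right-hand side. By Definition~\ref{comparison_bit_def}, $b_{ij} = 1$ precisely when $x_i \geq x_j$, i.e.\ when $x_j \leq x_i$, and $b_{ij} = 0$ otherwise. Therefore
\[
\sum_{j=1}^{n} b_{ij} \;=\; \bigl|\{\,j \in \{1,\ldots,n\} : x_j \leq x_i\,\}\bigr|.
\]
Note that the index $j=i$ does contribute (since $x_i \geq x_i$ gives $b_{ii}=1$), which matches the fact that $x_i$ itself belongs to the counted set.

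Next I would handle the left-hand side using Definition~\ref{definition_KRE}. Let $\tilde{x}_1 \leq \cdots \leq \tilde{x}_n$ be the sorted sequence, and let $r_i$ be the rank of $x_i$, so that $x_i = \tilde{x}_{r_i}$. Because the integers are pairwise distinct, the sorted sequence is strictly increasing, and the indices $\ell \in \{1,\ldots,n\}$ for which $\tilde{x}_\ell \leq x_i = \tilde{x}_{r_i}$ are exactly $\ell = 1, \ldots, r_i$. Since $\{\tilde{x}_1, \ldots, \tilde{x}_n\} = \{x_1, \ldots, x_n\}$ as multisets (indeed as sets, by distinctness), this gives
\[
r_i \;=\; \bigl|\{\,\ell : \tilde{x}_\ell \leq x_i\,\}\bigr| \;=\; \bigl|\{\,j : x_j \leq x_i\,\}\bigr|.
\]
Combining the two identities yields $r_i = \sum_{j=1}^{n} b_{ij}$, which is the desired claim.

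There is no real obstacle here; the only subtlety is making sure the diagonal term $b_{ii}$ is counted correctly and that the distinctness hypothesis is used to convert $x_i \geq x_j$ into $x_j \leq x_i$ without double-counting at equality. Distinctness is exactly what guarantees that the rank is well-defined as a single integer in $\{1,\ldots,n\}$ and that the cardinalities above coincide.
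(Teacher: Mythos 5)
Your proof is correct and follows essentially the same counting argument as the paper: both identify $r_i$ with the cardinality of $\{\,j : x_j \leq x_i\,\}$ and observe that the sum $\sum_{j=1}^{n} b_{ij}$ counts exactly this set. Your version is slightly more explicit about the diagonal term $b_{ii}$ and the role of distinctness, but the underlying idea is identical.
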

\begin{proof}
Since $r_i$ is the rank of $x_i$, $x_i$ is by definition larger or equal to $r_i$ elements in $\{x_1, \ldots, x_n\}$. This means that $r_i$ values among $b_{i1}, \ldots, b_{in}$ are 1 and the remaining $n-r_i$ values are 0. It follows that $\sum_{j=1}^{n} b_{ij} = r_i$.
\end{proof}
The above lemma requires distinct inputs. To make sure that clients' inputs are indeed distinct before the protocol execution, we borrow the idea of \cite{AggarwalMP.2010} and use the index of parties as differentiator. Each party $C_i$ represents its index $i$ as a $\log n$-bit string and appends it at the end (i.e., in the least significant positions) of the binary string of $x_i$, resulting in a new input of length $\inputlen+\log n$. 
For simplicity, we assume in the remainder of the paper, that the $x_i$'s are all distinct $\inputlen$-bit integers.
Therefore, it is not necessary to compare all pairs $(x_i, x_j), 1\leq i,j \leq n$, since we can deduce $b_{ji}$ from $b_{ij}$. 

As explained in Definition \ref{comparison_bit_def}, $C_i, C_j$ play different role in the comparison for $(x_i, x_j)$. Therefore, we would like to equally distribute the roles among the clients. As example for $n=3$, we need to compute only three (instead of nine) comparisons resulting in three \emph{head} roles and three \emph{tail} roles. Then we would like each of the three clients to play the role \emph{head} as well as \emph{tail} exactly one time. We will use Definition \ref{def:Paired} and Lemma \ref{Paired_Predicate_Lemma} to equally distribute the roles \emph{head} and \emph{tail} between clients.
\begin{definition}
	\label{def:Paired}
	Let $\mathbb{X} = \{x_1, \ldots, x_n\}$ be a set of $n$ integers. We define the predicate \paired~as follows:
	\begin{subequations}
\begin{align}
  \paired(i,j):=~ & (i  \equiv 1\;(\bmod\;2)  \land  i > j  \land  j  \equiv 1\;(\bmod\;2))~\vee \label{p1} \\
            & (i  \equiv 1\;(\bmod\;2)  \land  i < j  \land  j  \equiv 0\;(\bmod\;2))~\vee \label{p2} \\
	          & (i  \equiv 0\;(\bmod\;2)  \land  i > j  \land  j  \equiv 0\;(\bmod\;2))~\vee \label{p3} \\
	          & (i  \equiv 0\;(\bmod\;2)  \land  i < j  \land  j  \equiv 1\;(\bmod\;2)).     \label{p4} 
\end{align}
\label{eq:Paired_Predicate}
\end{subequations}
\end{definition}

\begin{lemma}
\label{Paired_Predicate_Lemma}
Let $\mathbb{X} = \{x_1, \ldots, x_n\}$ be a set of $n$ integers and the predicate $\paired$ be as above.	
Then comparing only pairs $(x_i, x_j)$ such that $\paired(i,j)=true$ is enough to compute the rank of all elements in $\mathbb{X}$.
\end{lemma}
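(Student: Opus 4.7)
The plan is to reduce the lemma to two independent facts: (i) the predicate $\paired$ induces a perfect orientation of the complete graph on $\{1,\dots,n\}$, i.e., for every unordered pair $\{i,j\}$ with $i \neq j$, exactly one of $\paired(i,j)$ and $\paired(j,i)$ is true; and (ii) for distinct inputs the comparison bits are antisymmetric, so $b_{ji} = 1 - b_{ij}$ whenever $i \neq j$. Given (i) and (ii), every $b_{ij}$ can be recovered from the computed set of comparisons, and Lemma \ref{rank_computation_lemma} immediately yields each rank $r_i$.

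First I would dispatch the antisymmetry fact. The standing assumption (made in the paragraph preceding the lemma) is that after appending distinct indices the $x_i$'s are pairwise distinct $\inputlen$-bit integers. From Definition \ref{comparison_bit_def} we then have $b_{ij} = 1$ iff $x_i > x_j$, so $b_{ij} + b_{ji} = 1$ for $i \neq j$, and $b_{ii} = 1$ trivially.

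The core of the proof is establishing (i) by a parity case analysis on $(i,j)$, using that clauses (\ref{p1})--(\ref{p4}) are mutually exclusive since each fixes the parity of both $i$ and $j$. If $i$ and $j$ are both odd, only (\ref{p1}) can apply, and it fires iff $i > j$; hence among $\paired(i,j)$ and $\paired(j,i)$ exactly the one with the larger head index is true. The case of two even indices is identical via (\ref{p3}). If $i$ is odd and $j$ is even, only (\ref{p2}) can possibly make $\paired(i,j)$ true, which requires $i < j$, while $\paired(j,i)$ can only fire through (\ref{p4}), which requires $j < i$; these are complementary since $i \neq j$. The case $i$ even, $j$ odd is symmetric. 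So each unordered pair is covered exactly once, proving (i).

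Finally, I would assemble the conclusion: for any ordered pair $(i,j)$ with $i \neq j$, if $\paired(i,j)$ holds then $b_{ij}$ is directly among the computed comparisons, otherwise $\paired(j,i)$ holds by (i), $b_{ji}$ is computed, and $b_{ij} = 1 - b_{ji}$ by (ii); together with $b_{ii} = 1$, the full row $(b_{i1},\dots,b_{in})$ is determined, and Lemma \ref{rank_computation_lemma} gives $r_i = \sum_{j=1}^{n} b_{ij}$. The only delicate step, and the one that really justifies the somewhat baroque definition of $\paired$, is the parity case analysis establishing the "exactly once" covering; the rest is bookkeeping.
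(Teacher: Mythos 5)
Your proof is correct and takes essentially the same route as the paper: your fact (i), that exactly one of $\paired(i,j)$ and $\paired(j,i)$ holds for each $i\neq j$, is precisely the content of the paper's two claims (the partition of the ordered pairs into $\mathbb{P}_1$ and the mirror set $\mathbb{P}_2$ defined via the swapped predicate $Q(i,j)=\paired(j,i)$), and the remaining assembly (antisymmetry $b_{ji}=1-b_{ij}$ for distinct inputs, $b_{ii}=1$, and Lemma \ref{rank_computation_lemma}) is the same bookkeeping the paper relies on. Your explicit parity case analysis simply verifies in more detail the clause-by-clause exclusivity and swap correspondence that the paper asserts.
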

\begin{proof}
Let $\mathbb{P} = \{(x_i, x_j): x_i, x_j \in \mathbb{X} \wedge i \neq j\}$, $\mathbb{P}_1 = \{(x_i, x_j): x_i, x_j \in \mathbb{X} \wedge \paired(i,j) = true\}$, $\mathbb{P}_2 = \{(x_i, x_j): x_i, x_j \in \mathbb{X} \wedge Q(i, j) = true\}$, where $Q(i,j)$ is defined as follows:

\begin{subequations}
\begin{align}
  Q(i,j):=~ & (i  \equiv 1\;(\bmod\;2)  \land  i < j  \land  j  \equiv 1\;(\bmod\;2))~\vee \label{q1} \\
            & (i  \equiv 0\;(\bmod\;2)  \land  i > j  \land  j  \equiv 1\;(\bmod\;2))~\vee \label{q2} \\
	          & (i  \equiv 0\;(\bmod\;2)  \land  i < j  \land j   \equiv 0\;(\bmod\;2))~\vee \label{q3} \\
	          & (i  \equiv 1\;(\bmod\;2)  \land  i > j  \land  j  \equiv 0\;(\bmod\;2)).      \label{q4}
\end{align}
\label{eq:Paired_Compl}
\end{subequations}

Clearly, $\mathbb{P}$ contains the maximum number of comparisons required to compute the rank of every $x_i \in \mathbb{X}$. Now it suffices to show that:
\begin{enumerate}
	\item \label{claim1} $\mathbb{P}_1$ and $\mathbb{P}_2$ form a partition of $\mathbb{P}$
	\item \label{claim2} $\forall ~(x_i, x_j) \in \mathbb{P}: (x_i, x_j) \in \mathbb{P}_1 \Leftrightarrow (x_j, x_i) \in \mathbb{P}_2$
\end{enumerate}
$\mathbb{P}_1$ and $\mathbb{P}_2$ are clearly subsets of $\mathbb{P}$. For each $(x_i, x_j) \in \mathbb{P}$, $(i,j)$ satisfies exactly one of the conditions (\ref{p1}), \ldots, (\ref{p4}), (\ref{q1}), \ldots, (\ref{q4}), hence $\mathbb{P} \subseteq \mathbb{P}_1 \cup \mathbb{P}_2$. Moreover, for each $(x_i, x_j) \in \mathbb{P}$, either $\paired(i, j) = true$ or $Q(i,j) = true$. It follows that $\mathbb{P}_1 \cap \mathbb{P}_2 = \emptyset$ which concludes the proof of claim \ref{claim1}. 
To prove claim \ref{claim2}, it suffices to see that, $(i,j)$ satisfies condition (\ref{p1}) if and only if $(j,i)$ satisfies condition (\ref{q1}). The same holds for (\ref{p2}) and (\ref{q2}), (\ref{p3}) and (\ref{q3}), (\ref{p4}) and (\ref{q4}).
\end{proof}

\noindent For example, if $n=3$, we compute comparison bits only for $(x_1, x_2)$, $(x_2, x_3)$, $(x_3, x_1)$ and deduce the remaining comparison bits from the computed ones. If $n=4$, we compare only $(x_1, x_2)$, $(x_1, x_4)$, $(x_2, x_3)$, $(x_3, x_1)$, $(x_3, x_4)$, $(x_4, x_2)$.

\noindent The  predicate \paired~(Equation \ref{eq:Paired_Predicate}) is used in our schemes to reduce the number of comparisons and to equally distribute the computation task of the comparisons among the clients.
Let $\#head_i$ (resp. $\#tail_i$) denote the number of times $\paired(i,j)=true$ (resp. $\paired(j,i)=true$) holds. For example, if $n=3$, we have $\#head_i = \#tail_i = 1$ for all clients. However, for $n=4$, we have $\#head_1=\#head_3 = 2$, $\#tail_1=\#tail_3 = 1$, $\#head_2=\#head_4 = 1$ and $\#tail_2=\#tail_4 = 2$.

\begin{lemma}
\label{Paired_CMP_Count_By_CLT}
Let $\mathbb{X} = \{x_1, \ldots, x_n\}$ be a set of integers and assume the predicate \paired~  is used to sort $\mathbb{X}$. If $n$ is odd then:
$ \#head_i = \#tail_i = \frac{n-1}{2}.$  
If $n$ is even then:
\[
\#head_i=
		\begin{cases}
		\frac{n}{2}     & \mbox{if} ~ i ~ \mbox{odd}\\  
		\frac{n}{2} - 1 & \mbox{if} ~ i ~ \mbox{even}
		\end{cases} 
~
\#tail_i=
		\begin{cases}
		\frac{n}{2} - 1 & \mbox{if} ~ i ~ \mbox{odd}\\  
		\frac{n}{2}     & \mbox{if} ~ i ~ \mbox{even}.
		\end{cases} 
\]
\end{lemma}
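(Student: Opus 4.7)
The plan is to compute $\#head_i$ by direct enumeration over the four disjuncts of $\paired(i,j)$, and then obtain $\#tail_i$ for free from a complementarity argument that reuses Lemma~\ref{Paired_Predicate_Lemma}. The computation splits on two independent parity conditions: the parity of the index $i$ (which determines whether disjuncts (\ref{p1})--(\ref{p2}) or (\ref{p3})--(\ref{p4}) are the relevant ones), and the parity of $n$ (which determines how many odd and even indices lie in $\{1,\ldots,n\}$).

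For $\#head_i$, observe that when $i$ is odd, $\paired(i,j)=true$ iff either $j$ is odd with $j<i$ (by (\ref{p1})) or $j$ is even with $j>i$ (by (\ref{p2})); these two cases are disjoint. When $i$ is even, the analogous two disjoint cases come from (\ref{p3}) and (\ref{p4}). In each case I would count directly. For example, for $i$ odd there are exactly $(i-1)/2$ odd indices strictly smaller than $i$, and the number of even indices strictly greater than $i$ equals $(n-i+1)/2$ when $n$ is even and $(n-i)/2$ when $n$ is odd; summing gives $n/2$ and $(n-1)/2$ respectively. The even-$i$ case is analogous and yields $n/2-1$ when $n$ is even and $(n-1)/2$ when $n$ is odd, matching the lemma's first and second formulas for $\#head_i$.

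For $\#tail_i$, I would invoke the key observation, already implicit in the proof of Lemma~\ref{Paired_Predicate_Lemma}: the sets $\mathbb{P}_1$ (pairs with $\paired=true$) and $\mathbb{P}_2$ (pairs with $Q=true$) partition the set $\mathbb{P}$ of all ordered pairs with distinct entries, and $(i,j)\in\mathbb{P}_1\Leftrightarrow (j,i)\in\mathbb{P}_2$. Consequently, for every $j\neq i$ exactly one of $\paired(i,j)$ and $\paired(j,i)$ holds, so $\#head_i+\#tail_i=n-1$. Subtracting the formula for $\#head_i$ from $n-1$ yields $(n-1)/2$ when $n$ is odd, and $n/2-1$ (resp.\ $n/2$) when $n$ is even and $i$ is odd (resp.\ even), which is exactly the claim.

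The only real obstacle is bookkeeping the boundary of the index range: when $n$ is even the largest odd index is $n-1$ and the largest even index is $n$, whereas when $n$ is odd these roles are swapped, and this shifts the counts in (\ref{p2}) and (\ref{p4}) by one. Beyond that the argument is elementary counting, and the use of Lemma~\ref{Paired_Predicate_Lemma} to derive $\#tail_i$ avoids having to redo a symmetric enumeration for the predicate $Q$.
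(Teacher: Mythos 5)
Your proof is correct, but it is organized differently from the paper's. The paper derives the whole lemma from the complementarity fact (for each $j\neq i$ exactly one of $\paired(i,j)$ and $\paired(j,i)$ holds, so $\#head_i+\#tail_i=n-1$) and then handles the even case by reduction to $n'=n-1$: for $n$ even it applies the odd case to $\{1,\ldots,n-1\}$ and adds the single contribution of index $n$, namely $\paired(i,n)$ for $i$ odd (condition (\ref{p2})) and $\paired(n,i)$ for $i$ even (condition (\ref{p3})). You instead compute $\#head_i$ by a direct parity count over the four disjuncts and use complementarity only to deduce $\#tail_i$; your counts $(i-1)/2$, $(n-i+1)/2$, $(n-i)/2$, etc.\ all check out and sum to the claimed values. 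Your route is more self-contained and arguably more rigorous on one point: the paper's assertion that complementarity alone ``proves the case when $n$ is odd'' leaves implicit why $\#head_i=\#tail_i$ (the sum being $n-1$ does not by itself force the two terms to be equal), whereas your explicit enumeration establishes $\#head_i=(n-1)/2$ directly. The paper's reduction from $n$ even to $n-1$ odd, on the other hand, is slicker and sidesteps exactly the boundary bookkeeping (largest odd versus largest even index) that you flag as the main nuisance in your approach. Both arguments are valid; yours trades a little more counting for fewer appeals to symmetry.
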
 
\begin{proof}
This is actually a corollary of the proof of Lemma \ref{Paired_Predicate_Lemma}. It follows from the fact that $(x_i, x_j) \in \mathbb{P}_1 \Leftrightarrow (x_j, x_i) \in \mathbb{P}_2$ and any $x_i$ is involved in $n-1$ comparisons (since we need $b_{i1}, \ldots, b_{in}$ to compute $r_{i} = \sum_{j=1}^{n}b_{ij}$, where we trivially have $b_{ii} = 1$ without comparison). This proves the case when $n$ is odd. If $n$ is even then the odd case applies for $n' = n-1$. Then for each $i \in \{1, \ldots, n'\}$, we have $\paired(i, n) = true$ if $i$ is odd (condition \ref{p2}) and $\paired(n, i) = true$ if $i$ is even (condition \ref{p3}).
\end{proof}

\section{Protocol \skeygc}
\label{GC_Secure_KRE}
This section describes  \skeygc~ (Protocol \ref{YGC_KRE_Protocol}) based on GC which consists of an initialization and a main protocol. During initialization, parties generate and distribute cryptographic keys. The online protocol uses GC to compare the inputs and AHE to compute the rank of each $x_i$ from the comparison bits. We denote an AHE ciphertext with $\lsem \cdot \rsem$ (see Table \ref{Notation_Table}).

\subsection{\skeygc ~Initialization}
\label{Initialization_GC}
The initialization consists of public key distribution and Diffie-Hellman (DH) key agreement. %~\cite{Diffie.1976}. 
Each client $C_i$ sends its public key $pk_i$ (e.g., using a pseudonym certificate) of an AHE to the server. The server then distributes the public keys to the clients. 
In our implementation, we use the Paillier \cite{Paillier.1999} scheme, but any AHE scheme such as \cite{Koblitz1987} will work as well.
Then each pair $(C_i, C_j)$ of clients runs DH key exchange through the server to generate a common secret key $ck_{ij} = ck_{ji}$. 
The common key $ck_{ij}$ is used by $C_i$ and $C_j$ to seed the pseudorandom number generator (PRNG) of the garbling scheme that is used to generate a comparison GC for $x_i$ and $x_j$, i.e. $Gb(1^{\secprm}, ck_{ij}, f_{>})$, where $f_{>}$ is a Boolean comparison circuit.

\subsection{\skeygc ~Main Protocol}
\label{Main_Protocol_GC}
Protocol \ref{YGC_KRE_Protocol} is a four-round protocol in which we use GC to compare pairs of inputs and to reveal a blinded comparison bit to the server. Then we use AHE to unblind the comparison bits, compute the ranks and the KRE without revealing anything to the parties. Let $f_{>}$ be defined as: $f_{>}((\bb_i, x_i), (\bb_j, x_j)) = \bb_i \oplus \bb_j \oplus b_{ij}$, where $\bb_i, \bb_j \in \{0,1\}$, i.e., $f_{>}$ compares $x_i, x_j$ and blinds the comparison bits $b_{ij}$ with $\bb_i, \bb_j$. %\\

%\noindent\textit{Comparing Inputs.}
\paragraph{\textbf{Comparing Inputs.}}
For each pair $(x_i, x_j)$, if $\paired(i,j) = true$ the parties do the following:
\begin{itemize}
	\item Client $C_i$ chooses a masking bit $\bb^{ij}_i \stackrel{\$}{\leftarrow} \{0,1\}$ and extends its input to $(\bb^{ij}_i, x_i)$. 
	Then using the common key $ck_{ij}$, it computes $(F^{ij}_{>}, e) \leftarrow Gb(1^{\secprm}, ck_{ij}, f_{>})$ and $(\bar{\bb}_{i}^{ij}, \bar{x}_{i}^{ij}) \leftarrow En(e, (\bb^{ij}_i, x_i))$, and sends $F^{ij}_{>}$, $(\bar{\bb}_{i}^{ij},\bar{x}_{i}^{ij})$ to the server $S$.
	
	\item Client $C_j$ chooses a masking bit $\bb^{ij}_j \stackrel{\$}{\leftarrow} \{0,1\}$ and extends its input $x_j$ to $(\bb^{ij}_j, x_j)$. 
	Then using the common key $ck_{ji} = ck_{ij}$, it computes $(F^{ij}_{>}, e) \leftarrow Gb(1^{\secprm}, ck_{ji}, f_{>})$ and $(\bar{\bb}_{j}^{ij}, \bar{x}_{j}^{ij}) \leftarrow En(e, (\bb^{ij}_j, x_j))$, and sends only $(\bar{\bb}_{j}^{ij},\bar{x}_{j}^{ij})$ to the server $S$.

	\item We have $b'_{ij} \leftarrow F^{ij}_{>}((\bar{\bb}_{i}^{ij}, \bar{x}_{i}^{ij}), (\bar{\bb}_{j}^{ij}, \bar{x}_{j}^{ij})) = \bb^{ij}_i \oplus \bb^{ji}_j \oplus b_{ij} $ (i.e. $b_{ij}$ is hidden to $S$). The server then evaluates all GCs (Steps \ref{ygc_GC_evaluation} to \ref{ygc_GC_evaluation1}).
\end{itemize}

%\noindent\textit{Unblinding Comparison Bits.} 
\paragraph{\textbf{Unblinding Comparison Bits.} }
Using AHE, the parties unblind each $b'_{ij} = \bb^{ij}_i \oplus \bb^{ji}_j \oplus b_{ij}$, where $\bb^{ij}_i$ is known to $C_i$ and $\bb^{ij}_j$ is known to $C_j$, without learning anything. As a result $\lsem b_{ij} \rsem_i$ and $\lsem b_{ji} \rsem_j$ are revealed to $S$ encrypted under $pk_i$ and $pk_j$. This is illustrated in Steps \ref{ygc_Comparison_bit} to \ref{ygc_Comparison_bit1} and works as follows:
\begin{itemize}
	\item $S$ sends $b'_{ij}$ to $C_i$ and $C_j$. They reply with $\lsem \bb^{ji}_j \oplus b_{ij} \rsem_i$ and $\lsem \bb^{ij}_i \oplus b_{ij} \rsem_j$.
	\item $S$ forwards $\lsem \bb^{ij}_i \oplus b_{ij} \rsem_j$, $\lsem \bb^{ji}_j \oplus b_{ij} \rsem_i$ to $C_i$, $C_j$. They reply with $\lsem b_{ij} \rsem_j$, $\lsem b_{ij} \rsem_i$.
	\item $S$ sets $\lsem b_{ji} \rsem_j = \lsem 1-b_{ij} \rsem_j$. %\\
\end{itemize}

%\noindent\textit{Computing the Rank.}
\paragraph{\textbf{Computing the Rank.}}
The computation of the rank is done at the server by homomorphically adding comparison bits. Hence for each $i$, the server computes $\lsem r_{i} \rsem_i = \lsem \sum_{j=1}^{n} b_{ij} \rsem_i$. Then, it chooses a random number $\alpha_i$ and computes $\lsem \beta_i\rsem_i = \lsem (r_{i}-k) \cdot \alpha_i \rsem_i$ (Steps \ref{ygc_compute_rank} to \ref{ygc_compute_rank1}). The ciphertext $\lsem \beta_i\rsem_i$ encrypts 0 if $r_i=k$ (i.e., $x_i$ is the \krsym~element) otherwise it encrypts a random plaintext. %\\

%\noindent\textit{Computing the $\kre$'s Ciphertext.}
\paragraph{\textbf{Computing the $\kre$'s Ciphertext.}}
Each client $C_i$ receives $\lsem \beta_i\rsem_i$ encrypted under its public key $pk_i$ and decrypts it. Then if $\beta_i = 0$, $C_i$ sets $m_i = x_i$ otherwise $m_i = 0$. Finally, $C_i$ encrypts $m_i$ under each client's public key and sends $\lsem m_i \rsem_1, \ldots, \lsem m_i \rsem_n$ to the server (Steps \ref{ygc_compute_kre} to \ref{ygc_compute_kre1}). %\\

%\noindent\textit{Revealing the $\kre$'s Ciphertext.}
\paragraph{\textbf{Revealing the $\kre$'s Ciphertext.}}
In the final steps (Steps \ref{ygc_revealing_KRE} to \ref{ygc_revealing_KRE1}), the server adds all $\lsem m_j \rsem_i$ encrypted under $pk_i$ and reveals $\lsem \sum_{j=1}^{n} m_{j} \rsem_i$ to $C_i$.

%\begin{theorem}
%\label{SM_YGC_Correctness}
%For $n$ private inputs $x_1, \ldots, x_n$, \skeygc~protocol correctly computes the KRE.
%\end{theorem} 
%\begin{proof}
%The proof trivially follows from the correctness of the GC protocol, Lemmas \ref{rank_computation_lemma} and \ref{Paired_Predicate_Lemma} and the correctness of the AHE scheme.
%\end{proof}

\skeygc~protocol correctly computes the KRE. The proof trivially follows from the correctness of the GC protocol, Lemmas \ref{rank_computation_lemma} and \ref{Paired_Predicate_Lemma} and the correctness of the AHE scheme.
%
%\noindent The leakage in protocol \skeygc~is $\mathcal{L}_i = \mathcal{L}_S = \{n, \secprm, \ctxtlen, \inputlen\}$. 
\skeygc~is not fault-tolerant and a collusion with the server reveals all inputs to the adversary.
%We deal with this in the next sections.

\begin{figure}[tbp]
  \renewcommand{\figurename}{Protocol}
	\hrule
		\caption{\skeygc~ Protocol}
		\vspace{-2.99mm}
		\hrule
		\begin{algorithmic}[1]
		%\algrenewcommand{\algorithmiccomment}[1]{\hskip3em$\rightarrow$ #1}
			%Evaluating comparison GCs
			\For {$i := 1,j := i+1$ \textbf{to}  $n$} \label{ygc_GC_evaluation}
					\If {\paired($ i, j$)} 
						\State $C_i \rightarrow S$: $F^{ij}_{>}, (\bar{\bb}_{i}^{ij}, \bar{x}_i^{ij})$
						\State $C_j \rightarrow S$: $(\bar{\bb}_{j}^{ij}, \bar{x}_j^{ij})$
						\State $S$: \textbf{let} $b'_{ij} \gets F^{ij}_{>}(\bar{x}_i^{ij}, \bar{x}_j^{ij})$ 
					\EndIf
			\EndFor \label{ygc_GC_evaluation1}
				
			%Computing camparison bits
			\For {$i :=1, j := i+1$ \textbf{to}  $n$} \label{ygc_Comparison_bit}
					\If {\paired($ i, j$)}   
						\State $S \rightarrow C_i$: $b'_{ij}=\bb^{ij}_i \oplus \bb^{ji}_j \oplus b_{ij}$
						\State $S \rightarrow C_j$: $b'_{ij}=\bb^{ij}_i \oplus \bb^{ji}_j \oplus b_{ij}$
						
						\State $C_i \rightarrow S$: $\lsem \bb^{ji}_j \oplus b_{ij} \rsem_i$
						\State $C_j \rightarrow S$: $\lsem \bb^{ij}_i \oplus b_{ij} \rsem_j$
						
						\State $S \rightarrow C_i$: $\lsem \bb^{ji}_i \oplus b_{ij} \rsem_j$
						\State $S \rightarrow C_j$: $\lsem \bb^{ij}_j \oplus b_{ij} \rsem_i$
						
						\State $C_i \rightarrow S$: $\lsem b_{ij} \rsem_j$
						\State $C_j \rightarrow S$: $\lsem b_{ij} \rsem_i$
						
						\State $S$: \textbf{let} $\lsem b_{ji} \rsem_j \gets \lsem 1 - b_{ij} \rsem_j$ %\label{sorting_protocol11} 
					\EndIf
			\EndFor \label{ygc_Comparison_bit1}

			%Computing KRE ciphertext
			\For {$i := 1$ \textbf{to}  $n$} \label{ygc_compute_rank}
					\State $S:$ $\lsem r_{i} \rsem_i \leftarrow \lsem \sum_{j=1}^{n} b_{ij} \rsem_i$ \Comment{$b_{ii} = 1$}
					\State $S \rightarrow C_i$: $\lsem \beta_i \rsem_i \leftarrow \lsem (r_{i} - k)\cdot \alpha_i \rsem_i$, for a random $\alpha_i$
			\EndFor \label{ygc_compute_rank1}
			
			%Computing KRE
			\For {$i := 1$ \textbf{to}  $n$} \label{ygc_compute_kre}
				\State $C_i$: $ m_i :=\begin{cases}
																						 x_i   & \mbox{if} ~ \beta_i = 0\\  
																						 0     & \mbox{if} ~ \beta_i \neq 0
																						\end{cases} $
				\State $C_i \rightarrow S$: $\lsem m_i \rsem_1, \ldots, \lsem m_i \rsem_n$
			\EndFor \label{ygc_compute_kre1}
			
			%Revealing KRE
			\For {$i := 1$ \textbf{to}  $n$} \label{ygc_revealing_KRE}
					\State $S \rightarrow C_i$: $\lsem \sum_{j=1}^{n} m_{j} \rsem_i$
			\EndFor \label{ygc_revealing_KRE1}
		\end{algorithmic}
		\label{YGC_KRE_Protocol}
  \hrule
\end{figure}

\section{Protocol \skeahel}
\label{AHE1_Secure_KRE}
This section describes  \skeahel~ (Protocol \ref{AHE1_KRE_Protocol}) based on threshold AHE. \skeahel~ compares all inputs (using our modified variant of the Lin-Tzeng protocol \cite{LinT05}) at the server which then randomly distributes encrypted comparison bits to the  clients for threshold decryption.

\subsection{\skeahel ~Initialization}
\label{Initialization_AHE1}
We assume threshold key generation. Hence, there is a public/private key pair $(pk, sk)$ for an AHE, where  the private key $sk$ is split in $n$ shares $\skshare_1, \ldots, \skshare_n$ such that client $C_i$ gets share $\skshare_i$ and at least $t$ shares are required to reconstruct $sk$. 
Additionally, each client $C_i$ has its own AHE key pair $(pk_i, sk_i)$ and publishes $pk_i$ to all clients. We denote by $\lsem x_i \rsem, \lsem x_i \rsem_j$ encryptions of $x_i$ under $pk, pk_j$ respectively (Table \ref{Notation_Table}). 

\subsection{Modified Lin-Tzeng Comparison Protocol}
\label{LT_Homomorphic_Comparison}

We first describe our modified version of the Lin-Tzeng comparison protocol~\cite{LinT05}.
The main idea of their construction is to reduce the greater-than comparison to the set intersection problem of prefixes.
%Let $x_i$ be an integer with bit representation $x_{i}^{b}$. 
\paragraph{\textbf{Input Encoding.}}
Let $\textsc{Int}(y_{\eta}\cdots y_{1}) = y$ be a function that takes a bit string of length $\eta$ and parses it into the $\eta-$bit integer $y = \sum_{l=1}^{\eta}y_{l} \cdot 2^{l-1}$. 
The \emph{0-encoding} $V_{x_i}^{0}$ and \emph{1-encoding} $V_{x_i}^{1}$ of an integer input $x_i$ are the following vectors:
$V_{x_i}^{0} = (v_{i\inputlen}, \cdots, v_{i1}), V_{x_i}^{1} = (u_{i\inputlen}, \cdots, u_{i1})$, such that for each $l, (1\leq l \leq \inputlen)$  
\[
v_{il}=
		\begin{cases}
		\textsc{Int}(x_{i\inputlen}x_{i\inputlen-1}\cdots x_{il'}1) & \mbox{if} ~ x_{il} = 0\\  
		r^{(0)}_{il}                         & \mbox{if} ~ x_{il} = 1
		\end{cases} 
\]

\[		
u_{il}=
		\begin{cases}
		\textsc{Int}(x_{i\inputlen}x_{i\inputlen-1}\cdots x_{il}) & \mbox{if} ~ x_{il} = 1\\  
		r^{(1)}_{il}                      & \mbox{if} ~ x_{il} = 0,
		\end{cases} 
\]

%(with $l'=l+1$) and
%\[
%u_{il}=
		%\begin{cases}
		%\textsc{Int}(x_{i\inputlen}x_{i\inputlen-1}\cdots x_{il}) & \mbox{if} ~ x_{il} = 1\\  
		%r^{(1)}_{il}                      & \mbox{if} ~ x_{il} = 0
		%\end{cases} 
%\]
\noindent where $l'=l+1$, $r^{(0)}_{il}$, $r^{(1)}_{il}$ are random numbers of a fixed bitlength $\nu > \inputlen$ (e.g. $2^{\inputlen} \leq r^{(0)}_{il}, r^{(1)}_{il} < 2^{\inputlen+1}$) with $LSB(r^{(0)}_{il}) = 0$ and  $LSB(r^{(1)}_{il}) = 1$ (LSB is the least significant bit). If the \textsc{Int} function is used the compute the element at position $l$, then we call it a \emph{proper encoded element} otherwise we call it a \emph{random encoded element}. Note that a random encoded element  $r^{(1)}_{il}$ at position $l$ in the 1-encoding of $x_i$ is chosen such that it is guaranteed to be different to a proper or random encoded element at position $l$ in the 0-encoding of $x_j$, and vice versa. Hence, it is enough if $r^{(1)}_{il}$ and $r^{(0)}_{il}$ are just one or two bits longer than any possible proper encoded element at position $l$. Also note that the bitstring $x_{i\inputlen}x_{i\inputlen-1}\cdots x_{il}$ is interpreted by the function $\textsc{Int}$ as the bitstring $y_{\inputlen-l+1}\cdots y_{1}$ with length $\mu-l+1$ where $y_{1} = x_{il}, y_{2} = x_{i(l+1)}, \ldots, y_{\mu-l+1} = x_{i\mu}$.
If we see $V_{x_i}^{0}, V_{x_j}^{1}$ as sets, then  $x_i > x_j$ iff they have exactly one common element.

\begin{lemma}
 Let $x_i$ and $x_j$ be two integers, then $x_i > x_j ~\mbox{iff}~  V = V_{x_i}^{1} - V_{x_j}^{0}$ has a unique position with 0.
\end{lemma}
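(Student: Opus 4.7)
The plan is to reduce the claim to a purely combinatorial statement about bit prefixes: a zero at position $l$ in $V$ just means $V_{x_i}^{1}[l]=V_{x_j}^{0}[l]$, so I will characterize when this coordinatewise equality can hold and count the positions where it does. The argument splits naturally into three parts: (a) show that any zero must come from two ``proper'' encoded entries (never a random one), (b) translate a proper--proper collision into a statement about the common top prefix of $x_i$ and $x_j$, and (c) use the uniqueness of the most significant differing bit to finish.

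\textbf{Step 1.} By the encoding rules, a proper entry at position $l$ is obtained by applying $\textsc{Int}$ to a string of length at most $\mu-l+1 \le \mu$, and its LSB is always $1$ (either the bit $x_{il}=1$ itself in the $1$-encoding, or the $1$ appended in the $0$-encoding). A random entry, by construction, has bitlength strictly greater than $\mu$, and moreover $r^{(0)}_{\cdot l}$ has LSB $0$ while $r^{(1)}_{\cdot l}$ has LSB $1$. Hence a proper and a random entry at the same index cannot coincide (different bitlengths), and two random entries at the same index cannot coincide either (different LSBs in the only case where both are random, namely $x_{il}=0,\ x_{jl}=1$). Therefore any position with $V[l]=0$ must satisfy $x_{il}=1$ and $x_{jl}=0$, with both entries being proper.

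\textbf{Step 2.} Under $x_{il}=1$ and $x_{jl}=0$,
\[
V_{x_i}^{1}[l]=\textsc{Int}(x_{i\mu}\cdots x_{i(l+1)}\,1), \qquad V_{x_j}^{0}[l]=\textsc{Int}(x_{j\mu}\cdots x_{j(l+1)}\,1),
\]
so $V[l]=0$ iff $x_{ik}=x_{jk}$ for every $k>l$. Thus the zeros of $V$ are in bijection with the positions $l$ at which $x_i$ has bit $1$, $x_j$ has bit $0$, and every higher-order bit agrees.

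\textbf{Step 3.} For the ``only if'' direction, if $x_i>x_j$ let $l^{*}$ be the most significant position where the two integers differ; necessarily $x_{il^{*}}=1$, $x_{jl^{*}}=0$, and all bits above $l^{*}$ coincide, so $V[l^{*}]=0$. Uniqueness: any other candidate $l\neq l^{*}$ would require agreement of all bits above $l$; if $l<l^{*}$ this would force $x_{il^{*}}=x_{jl^{*}}$, and if $l>l^{*}$ the already-established agreement above $l^{*}$ would force $x_{il}=x_{jl}$, contradicting the characterization from Step~2. Conversely, any zero position in $V$ provides a bit where $x_i$ strictly exceeds $x_j$ while all higher bits match, so $x_i>x_j$.

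The main obstacle is Step~1: one has to lean on the bitlength/LSB discipline built into the random entries to rule out, \emph{deterministically}, that a random entry could accidentally equal a proper (or another random) entry at the same index. Once that is nailed down, Steps~2 and~3 are just careful bookkeeping on binary prefixes.
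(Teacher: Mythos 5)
Your proof is correct and follows essentially the same route as the paper's: both rule out collisions involving a random encoded element via the bitlength/LSB discipline, reduce a proper--proper collision at position $l$ to agreement of all bits above $l$ together with $x_{il}=1$, $x_{jl}=0$, and locate the unique zero at the most significant differing bit. Your organization is slightly cleaner (first characterizing all zero positions, then counting them) than the paper's explicit case analysis over positions above and below $l$, but the underlying ideas are identical.
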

\begin{proof}
If $V = V_{x_i}^{1} - V_{x_j}^{0}$ has a unique  0 at a position $l, (1\leq l \leq \inputlen)$ then $u_{il}$ and $v_{il}$ have bit representation $y_{\inputlen-l+1}\cdots y_{1}$, where for each $h, \inputlen-l+1 \geq h \geq 2$, $y_{h} = x_{ig} = x_{jg}$ with $g=l+h-1$, and $y_{1} = x_{il} = 1$ and $x_{jl} = 0$. It follows that $x_i > x_j$.

\noindent If $x_i > x_j$ then there exists a position $l$ such that for each  $h, \inputlen \geq h \geq l+1$, $x_{ih} = x_{jh}$ and $x_{il} = 1$ and $x_{jl} = 0$. This implies $u_{il} = v_{il}$. 

For $h, \inputlen \geq h \geq l+1$, either $u_{ih}$ bit string is a prefix of $x_{i}$ while $v_{jh}$ is random, or $u_{ih}$ is random while  $v_{jh}$ bit string is a prefix of $x_{j}$. From the choice of $r^{(0)}_{ih}$, $r^{(1)}_{ih}$, we have $u_{ih} \neq v_{ih}$. 

For $h, l-1 \geq h \geq 1$ there are three cases: $u_{ih}$ and $v_{ih}$ (as bit string) are both prefixes of $x_{i}$ and $x_{j}$, only one of them is prefix, both are random. For the first case the difference of the bits at position $l$ and for the other cases the choice of $r^{(0)}_{ih}$ imply that $u_{ih} \neq v_{ih}$.
\end{proof}

\paragraph{\textbf{The Protocol.}}
Let $\lsem V_{x_i}^{0} \rsem = \lsem v_{i\inputlen} \rsem, \ldots, \lsem v_{i1} \rsem$ and $\lsem V_{x_i}^{1} \rsem = \lsem u_{i\inputlen} \rsem, \ldots, \lsem u_{i1} \rsem$ denote the componentwise encryption of $V_{x_i}^{0}$ and $V_{x_i}^{1}$. Each client $C_i$ sends $\lsem V_{x_i}^{0} \rsem, \lsem V_{x_i}^{1} \rsem$ to the server. To determine the comparison bit for $x_i > x_j$, the server evaluates the function $\textsc{LinCompare}(\lsem V_{x_i}^{1} \rsem, \lsem V_{x_j}^{0} \rsem)$ (Algorithm \ref{Lin_Compare_Algo}) which returns $\inputlen$ ciphertexts among which exactly one encrypts zero if an only if $x_i > x_j$.
%\begin{itemize}
	%\item $c_{l} = \lsem (u_{i\inputlen} - v_{j\inputlen}) \cdot r_l \rsem, 1 \leq l \leq \inputlen$ and random $r_l$
	%\item choose a random permutation $\pi$ of $\{1, \cdots, \inputlen\}$
	%\item compute  $g_{ij} = (c_{\inputlen}, \cdots, c_{1}) \leftarrow \pi(c_{\inputlen}, \cdots, c_{1})$
%\end{itemize}

\paragraph{\textbf{Difference to the original protocol.}} In contrast to the original protocol of \cite{LinT05}, we note the following differences:
%modification to Lin Tzeng
\begin{itemize}
	\item Additively HE instead of multiplicative: It suits best with our setting and can be implemented using ElGamal on elliptic curve with better performance and smaller ciphertexts. While decrypting requires solving discrete logarithm, this is however not necessary since we are looking for ciphertexts encrypting 0.
	\item \textsc{Int} function: Instead of relying on a collision-free hash function as \cite{LinT05}, we use the \textsc{Int} function which is simpler to implement and more efficient as it produces smaller values.
	\item Choice of random encoded elements $r^{(0)}_{il}$, $r^{(1)}_{il}$: We choose the random encoded elements as explained above and encrypt them, while the original protocol uses ciphertexts chosen randomly in the ciphertext space. 
	\item Encrypting the encodings on both side: In the original protocol, the evaluator has access to $x_j$ in plaintext and does not need to choose random encoded elements. By encoding as explained in our modified version, we can encrypt both encodings and delegate the evaluation to a third party which is not allowed to have access to the inputs in plaintext.
\end{itemize}

\begin{figure}[tbp]
  \renewcommand{\figurename}{Algorithm}
	\hrule
    %\vspace{-2.99mm}
		\caption{Modified Lin-Tzeng Comparison Protocol}
		\vspace{-2.99mm}
		\hrule
		\begin{algorithmic}[1]
		\Function {\textsc{LinCompare}}{$\lsem V_{x_i}^1 \rsem, \lsem V_{x_j}^0 \rsem$}
				\State \textbf{parse} $\lsem V_{x_i}^1 \rsem$ \textbf{as} $\lsem u_{i\inputlen} \rsem, \ldots, \lsem u_{i1} \rsem$
				\State \textbf{parse} $\lsem V_{x_j}^0 \rsem$ \textbf{as} $\lsem v_{i\inputlen} \rsem, \ldots, \lsem v_{i1} \rsem$
				\For {$l := 1$ \textbf{to}  $\inputlen$} 
					%\State \textbf{choose} a random $r_l$ from the plaintext space
					\State $c_{l} = \lsem (u_{il} - v_{jl}) \cdot r_l \rsem$, for a random $r_l$
				\EndFor 
				\State \textbf{choose} permutation $\pi \stackrel{\$}{\leftarrow} \mathfrak{S}_{\inputlen}$
				\State  \Return $\pi(c_{\inputlen}, \cdots, c_{1})$
		\EndFunction
		\end{algorithmic}
		\label{Lin_Compare_Algo}
		\hrule
\end{figure}

\subsection{\skeahel ~Main Protocol}
\label{Main_Protocol_AHE1}
Protocol \ref{AHE1_KRE_Protocol} is a four-round protocol in which the clients send their inputs encrypted using AHE under the common public key $pk$ to the server. 
The server homomorphically evaluates comparison circuits on the encrypted inputs using our modified variant of the Lin-Tzeng protocol \cite{LinT05}.
Then the clients jointly decrypt the comparison results and compute the rank of each $x_i$.  %\\

%\noindent\textit{Uploading Ciphertexts.}
\paragraph{\textbf{Uploading Ciphertexts.}}
%\label{Ciphertext_Upload_THE}
Using the common public key $pk$, each client $C_i$ sends $\lsem x_{i} \rsem, \lsem V_{x_i}^{0} \rsem, \lsem V_{x_i}^{1} \rsem$ to the server as illustrated in Step \ref{ciphertext_upload_AHE1} of Protocol \ref{AHE1_KRE_Protocol}. %\\

%\noindent\textit{Comparing Inputs.}
\paragraph{\textbf{Comparing Inputs.}}
%\label{Pairwise_Homomorphic_Comparison}
The server compares the inputs pairwise by computing $g_{ij} \leftarrow \textsc{LinCompare}(\lsem V_{x_i}^{1} \rsem, \lsem V_{x_j}^{0} \rsem)$ for each $1 \leq i,j, \leq n$. Let  $G$ be the $n \times n$ matrix $$[g_{11}, \ldots, g_{1n}, \ldots, g_{n1}, \ldots, g_{nn}].$$ 
The server chooses $n+1$ permutations $\pi, \pi_1, \ldots, \pi_n \stackrel{\$}{\leftarrow} \mathfrak{S}_n$ that hide the indexes of $g_{ij}$ to the clients during threshold decryption:
$\pi$ permutes the rows of $G$ and 
each $\pi_i, 1\leq i \leq n$ permutes the columns of row $i$.
Let $G'_1, \ldots, G'_n$ be the rows of the resulting matrix $G'$ (after application of the permutations to $G$). Using Algorithm \ref{TH_Decryption_Request}, the server computes for each $C_i$ a $t \times n$ matrix $G^{(i)}$ consisting of the rows:
$G'_{i-t+1 \bmod n}, \ldots, G'_{i-1 \bmod n}, G'_i.$
 %The server then distributes the threshold decryption task equally among the clients. For each client $C_i$ the server computes $l_j = \pi(i-j+1 \bmod n), 1 \leq j \leq t$, and builds a $t \times n$ matrix $M^{(i)}$ as follows:
%\begin{align}
        %G^{(i)}_{1}&=(g_{l_1, \pi_1(1)}, \ldots, g_{l_1, \pi_1(n)}), \nonumber \\
				%\ldots \nonumber \\
        %G^{(i)}_{t}&=(g_{l_t, \pi_t(1)}, \ldots, g_{l_t, \pi_t(n)}). \nonumber
%\end{align}
The matrix $G^{(i)}$ and the list of combiners for rows in $G^{(i)}$ are sent to $C_i$ in Step \ref{TH_Decryption_Request_Call_AHE1}. An example is illustrated in Table \ref{Example_Dec_Req}.

Lemma \ref{TH_Dec_Ciphertext_Distribution} shows that the ciphertexts generated from Algorithm \ref{TH_Decryption_Request} allow to correctly decrypt the matrix $G = [g_{11}, \ldots, g_{nn}]$, i.e., each $g_{ij}$ is distributed to exactly $t$ different clients. By applying the lemma to the set of rows of $G$, the first part shows that each client receives exactly a subset of $t$ different rows of $G$. The second part shows that each row of $G$ is distributed to exactly $t$ different clients which allows a correct threshold decryption of each row.

\begin{lemma}
\label{TH_Dec_Ciphertext_Distribution}
Let $\mathbb{X} = \{x_1, \ldots, x_n\}$ be a set of $n$ elements, $\mathbb{X}_i = \{ x_{i-t+1}, \ldots, x_{i} \}$, $1\leq i \leq n$, where the indexes in $\mathbb{X}_i$ are computed modulo $n$, and $t \leq n$. Then:
%\begin{enumerate}[{\ref{TH_Dec_Ciphertext_Distribution}}.a]
%\begin{enumerate}[label=(\roman*)]
\begin{itemize}
	\item \label{TH_Dec_Ciphertext_Distribution_1} Each subset $\mathbb{X}_i$ contains exactly $t$ elements of $\mathbb{X}$ and 
	\item \label{TH_Dec_Ciphertext_Distribution_2} Each $x \in \mathbb{X}$ is in exactly $t$ subsets $\mathbb{X}_i$.
\end{itemize}
\end{lemma}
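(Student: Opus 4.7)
The plan is to prove the two bullets separately, both by a direct counting argument relying on the fact that the indices are taken modulo $n$ and $t \le n$.

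For the first bullet, I would observe that $\mathbb{X}_i$ is defined by the $t$ consecutive residues $i-t+1, i-t+2, \ldots, i$ taken modulo $n$. Since any $t$ consecutive integers are pairwise distinct modulo $n$ whenever $t \le n$, these $t$ residues index $t$ distinct elements of $\mathbb{X}$, so $|\mathbb{X}_i| = t$. This step is essentially immediate and should take one or two lines.

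For the second bullet, the cleanest approach is to rewrite the membership condition symmetrically. I would note that $x_j \in \mathbb{X}_i$ iff $j \equiv i-t+1, i-t+2, \ldots,$ or $i \pmod n$, which is equivalent to $i \equiv j, j+1, \ldots,$ or $j+t-1 \pmod n$. Thus the set of indices $i$ such that $x_j \in \mathbb{X}_i$ is itself a set of $t$ consecutive residues modulo $n$, and by the same argument as in the first bullet it contains exactly $t$ distinct values in $\{1,\ldots,n\}$. Hence each $x_j$ lies in exactly $t$ of the subsets $\mathbb{X}_i$.

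As a sanity check I would also verify the double-counting identity: summing $|\mathbb{X}_i|$ over $i$ gives $nt$ by the first bullet, and summing over elements the number of $\mathbb{X}_i$'s containing them gives $n \cdot t$ by the second bullet, which is consistent. I do not anticipate any real obstacle here; the only thing to be careful about is handling the modular arithmetic cleanly so that the argument works uniformly at the boundary (e.g.\ when $i < t$, where $i-t+1 \pmod n$ wraps around), which is why rephrasing membership symmetrically via $i \in \{j, \ldots, j+t-1\} \pmod n$ is the natural move.
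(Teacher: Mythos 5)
Your proof is correct and follows essentially the same route as the paper's: counting the $t$ consecutive indices defining $\mathbb{X}_i$ for the first bullet, and inverting the membership condition to see that $x_j$ lies exactly in $\mathbb{X}_j, \mathbb{X}_{j+1}, \ldots, \mathbb{X}_{j+t-1}$ (indices mod $n$) for the second. Your explicit remark that $t$ consecutive integers are pairwise distinct modulo $n$ when $t \le n$ is a small but welcome addition of rigor over the paper's version.
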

\begin{proof} It is clear from the definition that $\mathbb{X}_i \subseteq \mathbb{X}$ for all $i$ and since $i-(i-t+1) + 1 = t$, $\mathbb{X}_i$ has exactly $t$ elements. 
Let $x_i$ be in $\mathbb{X}$, then from the definition, $x_i$ is element of only the subsets $\mathbb{X}_{i}, \mathbb{X}_{i+1}, \ldots, \mathbb{X}_{i+t-1}$, where indexes of the $\mathbb{X}_i$ are computed $ \bmod ~n$. Again, it holds $(i+t-1)-i + 1 = t$.
\end{proof}

\begin{table}[tbp]
\centering
\begin{tabular}{| c || c || c || c |}
\hline
$G$& $G'$ & $\partdec$ & $\findec$ \\ \hline
$g_{11},  ~g_{12},  ~g_{13}$ & $g_{32}, ~g_{33}, ~g_{31}$ & $~C_1, ~C_2~$ & $C_1$\\
$g_{21},  ~g_{22},  ~g_{23}$ & $g_{11}, ~g_{13}, ~g_{12}$ & $~C_2, ~C_3~$ & $C_2$\\
$g_{31},  ~g_{32},  ~g_{33}$ & $g_{23}, ~g_{21}, ~g_{22}$ & $~C_1, ~C_3~$ & $C_3$\\
\hline
\end{tabular}
\caption{Threshold Decryption Example ($n=3, t=2$): \textnormal{The elements of $G$ are permuted resulting in $G'$. Clients in columns \enquote{\partdec} run \partdec() on the corresponding row and send the result to the client in column \enquote{\findec} for the final decryption.}}
\label{Example_Dec_Req}
\end{table}

After receiving $G^{(i)}$, each client $C_i$ performs its partial decryption for each ciphertext, re-encrypts each line $l ~(l \in I^{(i)})$ with the public key $pk_l$ of client $C_l$. This prevents the server to learn comparison bits. Then $C_i$ sends the result $h_{l, j}^{(i)} = \lsem \partdec(\skshare_i, g_{lj}) \rsem_l, ~(1\leq j \leq n)$ to the server (Step \ref{Client_Th_Dec_AHE1}). Client $C_l$ will be the combiner of the ciphertexts in line $l$. In Step \ref{Server_forward_Th_Dec_AHE1}, the server forwards the encrypted partial decryption results $h_{lj}^{(i_1)}, \ldots, h_{lj}^{(i_t)}$ of line $l$ and the corresponding $c_l = \lsem x_{\pi^{-1}(l)} \rsem$ to $C_l$. Client $C_l$ decrypts and reconstructs each comparison result resulting in the comparison bits $b_{l1}, \ldots, b_{ln}$ as illustrated in Steps \ref{Client_Dec_Comp_Res_AHE1} and \ref{Client_Rec_Comp_Res_AHE1}. %\\

\begin{figure}[tbp]
  \renewcommand{\figurename}{Algorithm}
	\hrule
    	%\vspace{-2.99mm}
		\caption{Decryption Request in \skeahel~and \skeahed}
		\vspace{-2.99mm}
		\hrule
		\begin{algorithmic}[1]
		\Function {\textsc{DecReq}}{$G, i, t, \pi$}
				%\State \textbf{parse} $G$ \textbf{as} $[g_{11}, \ldots, g_{nn}]$
				\State \textbf{parse} $G$ \textbf{as} $[g_{11}, \ldots, g_{1n}, \ldots, g_{n1}, \ldots]$
				\State \textbf{let} $G^{(i)} = [q^{(i)}_{11}, \ldots, q^{(i)}_{1n}, \ldots, q^{(i)}_{t1}, \ldots]$
				%\State choose $\pi: \{1,n\} \rightarrow \{1,n\}$
				\For {$u := 1$ \textbf{to}  $t$} 
					%\State choose $\pi_l: \{1,n\} \rightarrow \{1,n\}$
					\State $j \gets i-t+u \bmod n$
					\If {$j \leq 0$}
						\State $j \gets j + n$ \Comment {$1 \leq j \leq n$}
					\EndIf
					\State $I^{(i)} \leftarrow I^{(i)} \cup \{j\} $
					\State $w \gets \pi(j)$
					\For {$v := 1$ \textbf{to}  $n$}
						%\State $\beta \gets \pi_{i+1-u}(v)$
						\State $q^{(i)}_{uv} \leftarrow g'_{w v}$ 
					\EndFor 
				\EndFor 
				\State  \Return $(G^{(i)}, I^{(i)})$ 
		\EndFunction
		\end{algorithmic}
		\label{TH_Decryption_Request}
		\hrule
\end{figure}

%\noindent\textit{Computing the $\kre$'s Ciphertext.}
\paragraph{\textbf{Computing the $\kre$'s Ciphertext.}}
Each combiner $C_l$ computes the rank $r_l = \sum_{j = 1}^n b_{lj}$ (Step \ref{Client_Compute_Rank_AHE1}) and ciphertext $\tilde{c}_l$ that is either a re-encryption of $c_l$ if $r_l = k$ or an encryption of 0 otherwise (Step \ref{Client_Send_KRE_AHE1}). The ciphertext $\tilde{c}_l$ is sent back to the server. The server multiplies all $\tilde{c}_l$ (Step \ref{Server_Compute_KRE_Encrypted_AHE1}) resulting in a ciphertext $\tilde{c}$ of the KRE which is sent to a subset $\mathbb{I}_t$ of $t$ clients for threshold decryption (Step \ref{Server_Send_KRE_Encrypted_AHE1}). Each client in $\mathbb{I}_t$ performs a partial decryption (Step \ref{Client_Part_Dec_KRE_AHE1}), encrypts the result for all clients and sends the ciphertexts to the server (Step \ref{Client_Send_Part_Dec_KRE_AHE1}). Finally, the sever forwards the encrypted partial decryption to the clients (Step \ref{Server_Reveal_KRE_AHE1}) that they use to learn the KRE (Step \ref{Client_Rec_KRE_AHE1}).

\begin{figure}[tbp]
  \renewcommand{\figurename}{Protocol}
	\hrule
    	
		\caption{\skeahel~ Protocol}
		\vspace{-2.99mm}
		\hrule
		\begin{algorithmic}[1]
			\For {$i := 1$ \textbf{to}  $n$}
					\State $C_i \rightarrow S$: $\lsem x_i \rsem, \lsem V_{x_i}^0 \rsem, \lsem V_{x_i}^1 \rsem $ \label{ciphertext_upload_AHE1} 
			\EndFor 
			
			\For {$i,j := 1$ \textbf{to}  $n ~(i \neq j)$}
					\State $S$: $g_{ij} \leftarrow \textsc{LinCompare}(\lsem V_{x_i}^1 \rsem, \lsem V_{x_j}^0 \rsem)$ \label{HOM_Compare_Call_AHE1}
			\EndFor 

			\State $S:$ \textbf{let} permutations $\pi_0, \ldots, \pi_t \stackrel{\$}{\leftarrow} \mathfrak{S}_{n}$
			%\State $S:$ \textbf{choose} $t+1$ permutations $\pi_i \stackrel{\$}{\leftarrow} \mathfrak{S}_{n}$
			\State $S:$ \textbf{let} $G = [g_{11}, \ldots, g_{nn}]$ %and $\Pi = [\pi_0, \pi_1, \ldots, \pi_n]$
			\State $S:$ $(c_{1} \ldots, c_{n}) \leftarrow \pi_0(\lsem x_{1} \rsem, \ldots, \lsem x_{n} \rsem)$
			\For {$u, v := 1$ \textbf{to}  $n$} 
				\State $S:$ $g_{uv} \leftarrow g_{\pi_0(u)\pi_u(v)}$
			\EndFor

			\For {$i := 1$ \textbf{to}  $n$} 
					\State $S \rightarrow C_i$: $Q^{(i)} \leftarrow \textsc{DecReq}(G, i, t, \pi_0$) \label{TH_Decryption_Request_Call_AHE1} % 
					\State $C_i$:  \textbf{parse} $Q^{(i)}$ \textbf{as} $(G^{(i)}, I^{(i)})$
					\Statex \hspace{.57cm}  \textbf{parse} $G^{(i)}$ \textbf{as} $[q_{u_11}^{(i)}, \ldots \ldots, q_{u_tn}^{(i)}]$ 
					\Statex \hspace{.57cm}  \textbf{parse} $I^{(i)}$ \textbf{as} $[I^{(i)}_1, \ldots, I^{(i)}_t]$
			\EndFor 

			\For {$i := 1$ \textbf{to}  $n$} 
				\For {$u := 1 $ \textbf{to} $t$} 
					\For {$v := 1$ \textbf{to}  $n$} 
						\ForEach {$l \in I^{(i)}_u$}
							%\State $C_i \rightarrow S$: $h_{uv}^{(i)} \leftarrow \lsem \partdec(\skshare_i, q^{(i)}_{uv}) \rsem_l, ~\mbox{where}~ l \in I^{(i)}_u$ \label{Client_Th_Dec_AHE1}
							\State $C_i \rightarrow S$: $h_{uv}^{(i)} \leftarrow \lsem \partdec(\skshare_i, q^{(i)}_{uv}) \rsem_l$ \label{Client_Th_Dec_AHE1}
						\EndFor
					\EndFor
				\EndFor
			\EndFor	
			\State \textbf{let} $\{l_1, \ldots, l_t\}$ be the indexes of partial decryptors of the $l$-th row of $G$		
			\For {$l := 1$ \textbf{to}  $n$} \label{Server_forward_Th_Dec_AHE1}
				\State $S \rightarrow C_l$: $c_{l}$ 
				\For {$j := 1$ \textbf{to}  $n$}
					\State $S \rightarrow C_l$: $h^{(l_1)}_{lj}, \ldots, h^{(l_t)}_{lj}$ %\textbf{and} $c_{l}$ 
				\EndFor
			\EndFor
			
			\For {$l, j := 1$ \textbf{to}  $n$}
					%\State $C_l$: $d_1 = Dec(sk_l, h_{lj}^{(l_1)}), \ldots, d_t = Dec(sk_l, h_{lj}^{(l_t)})$ \label{Client_Dec_Comp_Res_AHE1}
					\State $C_l$: $d_u = Dec(sk_l, h_{lj}^{(l_u)}), u = 1 \ldots t $ \label{Client_Dec_Comp_Res_AHE1}
					\State $C_l$: $b_{lj} \leftarrow \findec(d_1, \ldots, d_t)$ \label{Client_Rec_Comp_Res_AHE1} 
					
			\EndFor
			
			\For {$l := 1$ \textbf{to}  $n$}
					\State $C_l$: $r_l \leftarrow \sum_{j = 1}^n b_{lj}$ \label{Client_Compute_Rank_AHE1}
					\State $C_l \rightarrow S$: $ \tilde{c}_l :=\begin{cases}
																						 c_l \cdot \lsem 0 \rsem  & \mbox{if} ~ r_l = k\\  
																						 \lsem 0 \rsem            & \mbox{if} ~ r_l \neq k
																						\end{cases} 
																						$\label{Client_Send_KRE_AHE1}
			\EndFor
			
			\State $S$: \textbf{let} $\tilde{c} \leftarrow \prod_{l=1}^{n} \tilde{c}_l$ \label{Server_Compute_KRE_Encrypted_AHE1}
			
			\State $S$: \textbf{let} $\mathbb{I}_t = \{i_1, \ldots, i_t\} \stackrel{\$}{\leftarrow} \{1, \ldots, n\}$ \label{Server_Decryptor_Clients_AHE1}

			\ForAll {$i \in \mathbb{I}_t$} 
				\State $S \rightarrow C_i$: $\tilde{c}$  \label{Server_Send_KRE_Encrypted_AHE1}
			\EndFor
			
			\ForAll {$i \in \mathbb{I}_t$}
					\State $C_i$: $m^{(i)} = \partdec(\skshare_i, \tilde{c})$ \label{Client_Part_Dec_KRE_AHE1}
					\State $C_i \rightarrow S$: $\lsem m^{(i)} \rsem_1, \ldots, \lsem m^{(i)} \rsem_n$ \label{Client_Send_Part_Dec_KRE_AHE1}
			\EndFor

			\For {$i := 1$ \textbf{to}  $n$} 
					\State $S \rightarrow C_i$: $\lsem m^{(i_1)} \rsem_i, \ldots, \lsem m^{(i_t)} \rsem_i$ \label{Server_Reveal_KRE_AHE1}
					\State $C_i$: $ \findec(m^{(i_1)}, \ldots, m^{(i_t)})$ \label{Client_Rec_KRE_AHE1}
			\EndFor
			
		\end{algorithmic}
		\label{AHE1_KRE_Protocol}
  \hrule
\end{figure}

%\begin{theorem}
%\label{SKE_AHE1_Correctness}
%For $n$ private inputs $x_1, \ldots, x_n$ and a threshold $t$, \skeahel~protocol correctly computes the KRE.
%\end{theorem} 
%\begin{proof}
%The proof trivially follows from the correctness of the Lin-Tzeng comparison protocol \cite{LinT05}, Lemmas \ref{rank_computation_lemma} and \ref{TH_Dec_Ciphertext_Distribution} and the correctness of AHE.
%\end{proof}

\skeahel~protocol correctly computes the KRE. The proof trivially follows from the correctness of the Lin-Tzeng comparison protocol \cite{LinT05}, Lemmas \ref{rank_computation_lemma} and \ref{TH_Dec_Ciphertext_Distribution} and the correctness of AHE.
\skeahel~executes all $O(n^2)$ comparisons non-interactively at the server, but requires threshold decryption for $O(n^2)$ elements. 
The next protocol runs the $O(n^2)$ comparisons in parallel with the help of the clients while requiring threshold decryption of only $O(n)$ elements.

\section{Protocol \skeahed}
\label{AHE2_Secure_KRE}
%In this section, we continue with threshold AHE and describe \skeahed~ (Protocol \ref{AHE2_KRE_Protocol}). 
In this section, we describe \skeahed~ (Protocol \ref{AHE2_KRE_Protocol}) which instantiates the comparison with the DGK protocol \cite{DamgardGK.2007}. The initialization is similar to the previous case.
We start by briefly reviewing the DGK protocol~\cite{DamgardGK.2007}.

\subsection{DGK Comparison Protocol}
Let $(pk_i, sk_i)$ be the public/private key pair of $C_i$. 
Client $C_i$ will be called \textit{Generator} and $C_j$ \textit{Evaluator}. 
%We first present the scheme in the 2-party setting and later show how we use the server. 
Privately evaluating $x_i \geq x_j$ works as follows: 
\begin{itemize}
	\item $C_i$ sends $\lsem x_{i\inputlen} \rsem, \ldots, \lsem x_{i1} \rsem$ to party $C_j$
	\item $C_j$ computes $(\delta_{ji}, \lsem z_{\inputlen} \rsem, \ldots, \lsem z_{1} \rsem) \leftarrow \textsc{\dgkeva}(\lsem x_{i}^{b} \rsem, x_{j}^{b})$ (Algorithm \ref{DGK_Protocol}) and
	%$$
	sends $(\lsem z_{\inputlen} \rsem, \ldots, \lsem z_{1} \rsem)$ to $C_i$ and outputs $\delta_{ji}$
	\item $C_i$ computes $\delta_{ij} \leftarrow \dgkdec(\lsem z_{\inputlen} \rsem, \ldots, \lsem z_{1} \rsem)$ (Algorithm \ref{DGK_Protocol}) and outputs $\delta_{ij}$. 
\end{itemize}
\begin{figure}[tbp]
  \renewcommand{\figurename}{Algorithm}
	\hrule
		\caption{Algorithms of the DGK Comparison Protocol}
		\vspace{-2.99mm}
		\hrule
		\begin{algorithmic}[1]
		\Function {\dgkcompare}{$ i, j$}
				\If {\paired($ i, j$)} 
					\State $S \rightarrow C_j$: $ \lsem x_{i}^{b} \rsem_i = \lsem x_{i\inputlen} \rsem_i, \ldots, \lsem x_{i1} \rsem_i$
					%\State $C_j$: $(\delta_{ji}, Z) \leftarrow \textsc{\dgkeva}(\lsem x_{i\inputlen} \rsem_i, \ldots, \lsem x_{i1} \rsem_i)$
					\State $C_j$: $(\delta_{ji}, Z) \leftarrow \dgkeva(\lsem x_{i}^{b} \rsem_i, x_{j}^{b})$
					
					\State $C_j$: \textbf{parse} $Z$ \textbf{as} $(\lsem z_{i\inputlen} \rsem_i, \ldots, \lsem z_{i1} \rsem_i)$
					
					\State $C_j \rightarrow S$: $ \lsem z_{i\inputlen} \rsem_i, \ldots, \lsem z_{i1} \rsem_i, \lsem \delta_{ji} \rsem$
					\State $S \rightarrow C_i$: $ \lsem z_{i\inputlen} \rsem_i, \ldots, \lsem z_{i1} \rsem_i, \lsem \delta_{ji} \rsem$
					\State $C_i$: $ \delta_{ij} \leftarrow \dgkdec(\lsem z_{i\inputlen} \rsem_i, \ldots, \lsem z_{i1} \rsem_i)$
					\State $C_i \rightarrow S$: $ \lsem \delta_{ij} \oplus \delta_{ji} \rsem$
					%\State \Return $(\delta_{ij}, \delta_{ji}, \lsem \delta_{ij} \oplus \delta_{ji} \rsem)$ 
					\State $S:$ \Return $\lsem b_{ij} \rsem = \lsem \delta_{ij} \oplus \delta_{ji} \rsem$ 
				\EndIf
		\EndFunction
		\end{algorithmic}
		\begin{algorithmic}[1]
		\Function {\dgkeva}{$\lsem x_{i}^{b} \rsem, x_{j}^{b}$}
				\State \textbf{parse} $\lsem x_{i}^{b} \rsem$ \textbf{as} $\lsem x_{i\inputlen} \rsem, \ldots, \lsem x_{i1} \rsem$
				\State \textbf{parse} $x_{j}^{b}$ \textbf{as} $x_{j\inputlen} \ldots x_{j1}$
				\For {$u := 1$ \textbf{to}  $\inputlen$} 
					\State \textbf{compute} $\lsem x_{iu} \oplus x_{ju} \rsem$
				\EndFor 
				\State $\delta_{ji} \stackrel{\$}{\leftarrow} \{0,1\}$, $s \gets 1-2\cdot\delta_{ji}$
				\For {$u := 1$ \textbf{to}  $\inputlen$} 
					\State $\lsem z_u \rsem \leftarrow  (\prod_{v=u+1}^{\inputlen} \lsem x_{iv} \oplus x_{jv} \rsem)^{3}$
					\State $\lsem z_u \rsem \leftarrow \lsem z_u \rsem \lsem s \rsem  \lsem x_{iu} \rsem  \lsem x_{ju} \rsem^{-1}$
					\State $\lsem z_u \rsem \leftarrow \lsem z_u \rsem^{r_u}$, for a random $r_u$
				\EndFor
				\State \textbf{choose} permutation $\pi \stackrel{\$}{\leftarrow} \mathfrak{S}_{\inputlen}$
				\State  \Return ($\delta_{ji}$, $ \pi(\lsem z_{\inputlen} \rsem, \ldots, \lsem z_{1} \rsem))$
		\EndFunction
		\end{algorithmic}
		\begin{algorithmic}[1]
		\Function {\dgkdec}{$\lsem z_{\inputlen} \rsem, \ldots, \lsem z_{1} \rsem$}
				\For {$u := 1$ \textbf{to}  $\inputlen$} 
					\If {$Dec(sk_i, \lsem z_u \rsem) = 0$} 
						\State \Return $1$
					\EndIf
				\EndFor
				\State  \Return $0$
		\EndFunction
		\end{algorithmic}
		\label{DGK_Protocol}
		\hrule
\end{figure}
\noindent 
The comparison with the server is illustrated in Algorithm \ref{DGK_Protocol}. For each pair $C_i, C_j$ such that $\paired(i,j)$ holds, the clients $C_i$ and $C_j$ run the DGK protocol with the server. The server forwards $\lsem x_{i}^{b} \rsem_i = \lsem x_{i\inputlen} \rsem_i, \ldots, \lsem x_{i1} \rsem_i$ (encrypted under $pk_i$) to $C_j$. Client $C_j$ runs $\dgkeva(\lsem x_{i}^{b} \rsem_i, x_{j}^{b})$ and obtains $\delta_{ji}, (\lsem z_{i\inputlen} \rsem_i, \ldots, \lsem z_{i1} \rsem_i)$ as result. It then encrypts $\delta_{ji}$ under the common public key and sends back $ \lsem z_{i\inputlen} \rsem_i, \ldots, \lsem z_{i1} \rsem_i, \lsem \delta_{ji} \rsem$ to client $C_i$ via the server. Client $C_i$ runs the decryption $\dgkdec(\lsem z_{i\inputlen} \rsem_i, \ldots, \lsem z_{i1} \rsem_i)$, obtains a shared bit $\delta_{ij}$ and sends back $ \lsem \delta_{ij} \oplus \delta_{ji} \rsem$ to the server. After the computation the clients $C_i$ and $C_j$ hold random shared bits $\delta_{ij}$ and $\delta_{ji}$ such that  $b_{ij} = [x_i \leq x_j]  = \delta_{ij} \oplus \delta_{ji}$ holds. The server learns the encryption $\lsem b_{ij} \rsem$ of the comparison bit $b_{ij}$. In the DGK protocol clients $C_i$ and $C_j$ perform respectively $O(\inputlen)$ and $O(6\inputlen)$ asymmetric operations. 

\subsection{\skeahed ~Main Protocol}
\label{Main_Protocol_TH-AHE-SM}
\skeahed~is a 4-round protocol in which inputs are compared interactively using the DGK protocol. The resulting comparison bit is encrypted under 
$pk$ and revealed to the server which then computes the ranks of the $x_i$'s and trigger a threshold decryption. %\\

%\noindent\textit{Uploading Ciphertext.}
\paragraph{\textbf{Uploading Ciphertext.}}
Each party $C_i$ sends $\lsem x \rsem$ (encrypted under the common public key $pk$) and $\lsem x_i^b \rsem_i = (\lsem x_{i\inputlen} \rsem_i, \ldots, \lsem x_{i1} \rsem_i)$ (encrypted under its own public key $pk_i$) to the server. This is illustrated in Step \ref{ciphertext_upload_AHE2} of protocol \ref{AHE2_KRE_Protocol}. The server then initializes a matrix $G = [g_{11}, \ldots, g_{nn}]$, where $g_{ii} = \lsem 1 \rsem$ and $g_{ij} (i\neq j)$ will be computed in the DGK protocol as $g_{ij} = \lsem b_{ij}\rsem$ if $\paired(i,j)$ is true,  and an array $X = [ \lsem x_1 \rsem, \ldots,  \lsem x_n \rsem]$ (Step \ref{init_matrix_AHE}). %\\

%\noindent\textit{Comparing Inputs.}
\paragraph{\textbf{Comparing Inputs.}}
In this step, pairs of clients run \dgkcompare~with the server as illustrated in Algorithm \ref{DGK_Protocol}. 
If $(i,j)$ satisfies the predicate \paired, then $C_i$ runs the DGK protocol as generator and $C_j$ is the evaluator. After the computation, $C_i$ and $C_j$ get shares $\delta_{ij}$ and $\delta_{ji}$ of the comparison bit which is encrypted under $pk$ as $ \lsem b_{ij} \rsem = \lsem \delta_{ij} \oplus \delta_{ji} \rsem$ and revealed to the server. %\\

%\noindent\textit{Computing the $\kre$'s Ciphertext.}
\paragraph{\textbf{Computing the $\kre$'s Ciphertext.}}
After all admissible comparisons have been computed (and the result stored in the matrix $G$), the server uses Algorithm \ref{Compute_Rank} to compute the rank of each input $x_i$ by homomorphically adding the comparison bits involving $x_i$. Let $\lsem r_i \rsem$ be a ciphertext initially encrypting 0 and let $b_{ij} = \delta_{ij} \oplus \delta_{ji}$. For each $j$, if $\paired(i,j)$ is true (i.e., $\lsem b_{ij} \rsem$ has been computed) then we compute $\lsem r_i \rsem \leftarrow \lsem r_i +  b_{ij}\rsem$. Otherwise (i.e., $\lsem b_{ij} \rsem$ has not been computed but we can deduce it from $\lsem b_{ji} \rsem$) we compute $\lsem r_i \rsem \leftarrow \lsem r_i + 1 - b_{ij}\rsem$.
Now, the server has the encrypted rank $\lsem r_1 \rsem, \ldots, \lsem r_n \rsem$, where exactly one $\lsem r_i \rsem$ encrypts $k$. Since we are looking for the element whose rank is $k$, the server then computes $y_i = (\lsem r_i \rsem \cdot \lsem k \rsem^{-1} )^{\alpha_i} \cdot \lsem x_i \rsem = \lsem (r_i - k) \alpha_i + x_i \rsem$ for all $i$, where $\alpha_i$ is a number chosen randomly in the plaintext space. 
Therefore, for the ciphertext $\lsem r_i \rsem$ encrypting $k$, $y_i$ is equal to $\lsem x_i \rsem$. Otherwise $y_i$ encrypts a random plaintext. %\\
\begin{figure}[tbp]
  \renewcommand{\figurename}{Algorithm}
	\hrule
		\caption{Computing the KRE's ciphertext in \skeahed}
		\vspace{-2.99mm}
		\hrule
		\begin{algorithmic}[1]
		\Function {\textsc{ComputeKreAhe}}{$G, X, k$}
				\State \textbf{parse} $G$ \textbf{as} $[g_{11}, \ldots, g_{nn}]$ %and $X$ \textbf{as} $[ \lsem x_1 \rsem, \ldots,  \lsem x_n \rsem]$
				\State \textbf{parse} $X$ \textbf{as} $[ \lsem x_1 \rsem, \ldots,  \lsem x_n \rsem]$
				\For {$i := 1$ \textbf{to}  $n$}
					\State $\lsem r_i \rsem \leftarrow g_{ii}$
					\For {$j := 1$ \textbf{to}  $n~ (j \neq i)$} 
						\If {\paired($ i, j$)} 
							\State $\lsem r_i \rsem \leftarrow \lsem r_i \rsem \cdot g_{ij}$
						\Else
							\State $\lsem r_i \rsem \leftarrow \lsem r_i \rsem \cdot \lsem 1 \rsem \cdot  g_{ji}^{-1}$
						\EndIf
					\EndFor
				\EndFor
				\For {$i := 1$ \textbf{to}  $n$}
					\State $y_i \leftarrow ( \lsem r_i \rsem \cdot \lsem k \rsem^{-1} )^{\alpha_i} \cdot \lsem x_i \rsem$
				\EndFor
				\State  \Return $[y_1, \ldots, y_n]$ 
		\EndFunction
		\end{algorithmic}
		\label{Compute_Rank}
  \hrule
\end{figure}

%\noindent\textit{Decrypting the $\kre$'s Ciphertext.}
\paragraph{\textbf{Decrypting the $\kre$'s Ciphertext.}}
In Step \ref{Decryption_Request_Call_AHE2}, the server distributes the result $Y = [y_1, \ldots, y_n]$ of Algorithm \ref{Compute_Rank} to the clients for threshold decryption. For that, the array $Y$ is passed as $n \times 1$ matrix to Algorithm \ref{TH_Decryption_Request}.
In Step \ref{Client_Th_Dec_AHE2}, the server receives partial decryption results from the clients, forwards them to the corresponding combiner (Step \ref{Forward_to_combiner}). Each combiner $C_j$ performs a final decryption (Step \ref{Combiner_Rec_KRE_AHE2}) resulting in a message $\tilde{x}_j$ whose bitlength is less or equal to $\inputlen$ if it is the KRE. Combiner $C_j$ then sets $m^{(j)} = \tilde{x}_j$ if $|\tilde{x}_j| \leq \inputlen$, otherwise  $m^{(j)} = 0$ (Step \ref{combiner_Select_KRE}). Then $m^{(j)}$ is encrypted with the public key of all clients and send to the server (Step \ref{combiner_Send_KRE_Ciphertext_AHE2}). Finally, the server reveals the KRE to all clients (Step \ref{Server_Reveal_KRE_AHE2}).
 
\begin{figure}[!htb]
  \renewcommand{\figurename}{Protocol}
		\hrule
		\caption{\skeahed~ Protocol}
		\vspace{-2.99mm}
		\hrule
		\begin{algorithmic}[1]
			\For {$i := 1$ \textbf{to}  $n$} 
					\State $C_i \rightarrow S$: $ \lsem x_i \rsem, \lsem x_i^b \rsem_i$ \label{ciphertext_upload_AHE2}
			\EndFor

			\State $S:$ \textbf{let} $G = [g_{11}, \ldots, g_{nn}]$  \label{init_matrix_AHE}
			\Statex $S:$ \textbf{let} $X = [ \lsem x_1 \rsem, \ldots,  \lsem x_n \rsem]$ 
			
			\For {$i := 1, j := i+1$ \textbf{to}  $n$} 
					%\State $C_i, C_j, S$: $(\delta_{ij}, \delta_{ji}, g_{ij}) \leftarrow \dgkcompare(i, j)$ \label{DGK_Compare_Call}
					\State $C_i, C_j, S$: $g_{ij} \leftarrow \dgkcompare(i, j)$ \label{DGK_Compare_Call}
			\EndFor
			
			\State $S$: $Y \leftarrow \textsc{ComputeKreAhe}(G, X, k)$ \label{Call_Compute_KRE}
			\State $S$: \textbf{let} $\pi \stackrel{\$}{\leftarrow} \mathfrak{S}_{n}$ be a permutation

			\State $S$: \textbf{parse} $Y$ \textbf{as} $[y_{1}, \ldots, y_{n}]$
			\For {$i := 1$ \textbf{to}  $n$} 
				\State $S$: $y_i \leftarrow y_{\pi(i)}$
			\EndFor

			\For {$i := 1$ \textbf{to}  $n$} 
					\State $S \rightarrow C_i$: $Q^{(i)} \leftarrow \textsc{DecReq}(Y, i, t, \pi )$ \label{Decryption_Request_Call_AHE2}
					\State $C_i$: \textbf{parse} $Q^{(i)}$ as $(Z^{(i)}, I^{(i)})$
					\Statex \hspace{.57cm} \textbf{parse} $Z^{(i)}$ as $[z_{j_1}^{(i)}, \ldots, z_{j_t}^{(i)}]$ 
			\EndFor
			
				\For {$i := 1$ \textbf{to}  $n$} 
					\ForEach {$j$ \textbf{in}  $I^{(i)}$} 
						\State $C_i \rightarrow S$: $h_{j}^{(i)} \leftarrow \lsem \partdec(\skshare_i, z^{(i)}_j) \rsem_j$ \label{Client_Th_Dec_AHE2}
				\EndFor
			\EndFor
			
			\For {$j := 1$ \textbf{to}  $n$} 
					\State $S \rightarrow C_j$: $(h_{j}^{(i_1)}, \ldots, h_{j}^{(i_t)})$ \label{Forward_to_combiner}
			\EndFor

			\For {$j := 1$ \textbf{to}  $n$} 
					%\State $C_j$: $d_1 = Dec(sk_j, h_{j}^{(i_1)}), \ldots, d_t = Dec(sk_j, h_{j}^{(i_t)})$ \label{Combiner_Dec_KRE_Shares_AHE2}
					\State $C_j$: $d_u = Dec(sk_j, h_{j}^{(i_u)}), u = 1, \ldots, t$ \label{Combiner_Dec_KRE_Shares_AHE2}
					\State $C_j$: $\tilde{x}_j \leftarrow \findec(d_1, \ldots, d_t)$ \label{Combiner_Rec_KRE_AHE2}
					\State $C_j$: $ m^{(j)} :=\begin{cases}
																						 \tilde{x}_j & \mbox{if} ~ |\tilde{x}_{j'}| \leq \inputlen\\  
																						 0           & \mbox{if} ~ |\tilde{x}_{j'}| > \inputlen
																						\end{cases} 
																						$\label{combiner_Select_KRE}
					\State $C_j \rightarrow S$: $\lsem m^{(j)} \rsem_1, \ldots, \lsem m^{(j)} \rsem_n$ \label{combiner_Send_KRE_Ciphertext_AHE2}
			\EndFor

			\For {$i := 1$ \textbf{to}  $n$} 
					\State $S \rightarrow C_i$: $\lsem \sum_{j=1}^{n}m^{(j)} \rsem_i$ \label{Server_Reveal_KRE_AHE2}
					\State $C_i$: $ Dec(sk_i, \lsem \sum_{j=1}^{n}m^{(j)} \rsem_i)$ \label{Client_Dec_KRE_AHE2}
			\EndFor
			
		\end{algorithmic}
		\label{AHE2_KRE_Protocol}
	\hrule
\end{figure}

%\begin{theorem}
%\label{SKE_AHE2_Correctness}
%For $n$ private inputs $x_1, \ldots, x_n$ and a threshold $t$, \skeahed~protocol correctly computes the KRE.
%\end{theorem} 
%\begin{proof}
%The proof trivially follows from the correctness of DGK comparison protocol \cite{DamgardGK.2007, Veugen12}, Lemmas \ref{rank_computation_lemma} and \ref{TH_Dec_Ciphertext_Distribution} and the correctness of AHE.
%\end{proof}

\skeahed~protocol correctly computes the KRE. This trivially follows from the correctness of DGK protocol \cite{DamgardGK.2007}, Lemmas \ref{rank_computation_lemma} and \ref{TH_Dec_Ciphertext_Distribution} and the correctness of AHE. 
\skeahed~evaluates comparisons interactively but requires threshold decryption for $O(n)$ elements. 
Notice that \skeahed~can be instantiated with the Lin-Tzeng protocol \cite{LinT05} as well. To compare $x_i, x_j$, $C_j$ will receive both $\lsem V^{0}_{x_i} \rsem, \lsem V^{1}_{x_i} \rsem$ and randomly choose between evaluating $\textsc{LinCompare}$ either with $\lsem V^{1}_{x_i} \rsem, \lsem V^{0}_{x_j} \rsem$ or with $\lsem V^{1}_{x_j} \rsem, \lsem V^{0}_{x_i} \rsem$.
It will then set $\delta_{ji}\leftarrow 0$ or $\delta_{ji}\leftarrow 1$ accordingly. This improves the running time 
(\textsc{LinCompare} is more efficient than \textsc{\dgkeva}) 
while increasing the communication ($\inputlen$ more ciphertexts are sent to $C_j$).

In \skeahel~and \skeahed, we evaluated either the comparison (\skeahel) or the rank (\skeahed) completely at the server.
In the next scheme, we compute the KRE's ciphertext non-interactively at the server such that clients are only required for the threshold decryption of one ciphertext.

\section{Protocol \skeshe}
\label{SHE_Secure_KRE}

This section describes \skeshe~based on SHE. Hence, $\lsem x \rsem$ now represents an SHE ciphertext of the plaintext $x$. 
The initialization and threshold decryption are similar to \skeahel. 

\subsection{SHE Routines}

Protocol \ref{FHE_KRE_Protocol} is based on the BGV scheme \cite{BrakerskiGV11} as implemented in HElib \cite{HaleviS14} and requires binary plaintext space and Smart-Vercauteren  ciphertext packing (SVCP) technique \cite{SmartV2014}. 

Using SVCP, a ciphertext consists of a fixed number $m$ of slots encrypting bits, i.e. $\lsem \cdot | \cdot | \ldots | \cdot \rsem$. The encryption of a bit $b$ replicates $b$ to all slots, i.e., $\lsem b \rsem = \lsem b | b | \ldots | b \rsem$. However, we can pack the bits of $x^{b}_{i}$ in one ciphertext and will denote it by $\lsem \vec{x}_i \rsem = \lsem x_{i\inputlen} | \ldots | x_{i1}|0|\ldots|0 \rsem$. 

Each $C_i$ sends $\lsem x^{b}_{i} \rsem, \lsem \vec{x}_i \rsem$ to $S$ as input to Algorithm \ref{FHE_KRE} which uses built-in routines to compute the KRE. 

\noindent The routine \textsc{SheAdd} takes two or more ciphertexts and performs a component-wise addition mod 2, i.e., we have: 
$$\textsc{SheAdd}( \lsem b_{i1} | \ldots | b_{im} \rsem,  \lsem b_{j1} | \ldots | b_{jm} \rsem) =  \lsem b_{i1} \oplus b_{j1} | \ldots | b_{im} \oplus b_{jm} \rsem.$$ 

\noindent Similarly,  for component-wise multiplication routine we have: 
$$\textsc{SheMult}( \lsem b_{i1} | \ldots | b_{im} \rsem,  \lsem b_{j1} | \ldots | b_{jm} \rsem) =  \lsem b_{i1} \cdot b_{j1} | \ldots | b_{im} \cdot b_{jm} \rsem.$$ 

Let $x_i, x_j$ be two integers, $b_{ij} = [x_i > x_j]$ and $b_{ji} = [x_j > x_i]$, the routine \textsc{SheCmp} takes  $\lsem x^{b}_{i} \rsem,  \lsem x^{b}_{j} \rsem$, compares $x_i$ and $x_j$ and returns $\lsem b_{ij} \rsem$, $\lsem b_{ji} \rsem$. Note that if the inputs to \textsc{SheCmp} encrypt the same value, then the routine outputs two ciphertexts of 0.

Let $b_{i1}, \ldots, b_{in}$ be $n$ bits such that $r_i = \sum_{j=1}^{n}b_{ij}$ and let $r_i^b=r_{i\log n}, \ldots, r_{i1}$ be the bit representation of $r_i$. The routine \textsc{SheFadder} implements a full adder on $\lsem b_{i1} \rsem, \ldots, \lsem b_{in} \rsem$ and returns $\lsem r_i^b \rsem = (\lsem r_{i\log n} \rsem, \ldots, \lsem r_{i1} \rsem)$. 

There is no built-in routine for equality check in HElib. We implemented it using \textsc{SheCmp} and  \textsc{SheAdd}. Let $x_i$ and $x_j$ be two $\inputlen$-bit integers. 
%Let $\lsem k^b \rsem = (\lsem k_{\log n} \rsem, \ldots, \lsem k_{1} \rsem)$ be the encryption of the bit representation $k_{\log n}\ldots k_{1}$ of $k$. 
We use \textsc{SheEqual} to denote the equality check routine and implement $\textsc{SheEqual}(\ctxtrep{\bitrep{x_i}},  \ctxtrep{\bitrep{x_j}})$ by computing:
\begin{itemize}
	\item $(\lsem b'_{i} \rsem$, $\lsem b''_{i} \rsem) = \textsc{SheCmp}(\ctxtrep{\bitrep{x_i}},  \ctxtrep{\bitrep{x_j}})$ and
	\item  $\lsem \beta_i\rsem = \textsc{SheAdd}(\lsem b'_{i} \rsem$, $\lsem b''_{i} \rsem, \lsem 1 \rsem)$, which results in $\beta_i = 1$ if $x_i = x_j$ and $\beta_i = 0$ otherwise.
\end{itemize}

\subsection{\skeshe~Main Protocol}

In Protocol \ref{FHE_KRE_Protocol} the server $S$ receives encrypted inputs from clients. For each client's integer $x_i$, the encrypted input consists of:
\begin{itemize}
    \item an encryption  $\lsem \bitrep{x_i} \rsem = (\ctxtrep{x_{i\inputlen}}, \ldots, \ctxtrep{x_{i1}})$ of the bit representation and
    \item an encryption $\lsem \vec{x}_i \rsem = \lsem x_{i\inputlen} | \ldots | x_{i1}|0|\ldots|0 \rsem$ of the packed bit representation .
\end{itemize}
Then the server runs Algorithm \ref{FHE_KRE} which uses $\shecmp$ to pairwise compare the inputs resulting in encrypted comparison bits $\ctxtrep{b_{ij}}$. Then $\shefadder$ is used to compute the rank of each input by adding comparison bits. The result is an encrypted bit representation $\lsem \bitrep{r_{i}} \rsem$ of the ranks. Using the encrypted bit representations $\lsem \bitrep{k} \rsem$, $\lsem \bitrep{r_{i}} \rsem$  of $k$ and each rank, $\sheequ$ checks the equality and returns an encrypted bit $\lsem \beta_{i} \rsem$.
Recall that because of SVCP the encryption of a bit $\beta_i$ is automatically replicated in all slots, i.e., $\lsem \beta_i \rsem = \lsem \beta_i | \beta_i | \ldots | \beta_i \rsem$, such that evaluating $\lsem \vec{y_{i}} \rsem \leftarrow \shemul(\lsem \vec{x}_i \rsem, \lsem \beta_i \rsem)$, $1\leq i \leq n$, and $\sheadd(\lsem \vec{y_{1}} \rsem, \ldots, \lsem \vec{y_{n}} \rsem)$ returns the KRE's ciphertext.
%In Protocol \ref{FHE_KRE_Protocol} the server $S$ receives encrypted inputs from clients and runs Algorithm \ref{FHE_KRE}.
Recall that because of SVCP the encryption of a bit $\beta_i$ is automatically replicated in all slots, i.e., $\lsem \beta_i \rsem = \lsem \beta_i | \beta_i | \ldots | \beta_i \rsem$, such that evaluating $\lsem \vec{y_{i}} \rsem \leftarrow \textsc{SheMult}(\lsem \vec{x}_i \rsem, \lsem \beta_i \rsem)$, $1\leq i \leq n$, and $\textsc{SheAdd}(\lsem \vec{y_{1}} \rsem, \ldots, \lsem \vec{y_{n}} \rsem)$ returns the KRE's ciphertext.

Correctness and security follow trivially from Lemma \ref{rank_computation_lemma}, correctness and security of SHE.
The leakage is $\mathcal{L}_S = \mathcal{L}_i = \{n, t, \ctxtlen, \secprm, \inputlen\}$. %, even if the server colludes with up to $t-1$ clients.

\begin{figure}[!htb]
  \renewcommand{\figurename}{Algorithm}
	\hrule
    	
		\caption{Computing the KRE's Ciphertext in \skeshe}
		\vspace{-2.99mm}
		\hrule
		\begin{algorithmic}[1]
		\Function {\textsc{ComputeKreShe}}{$X, Z, c$}
				\State \textbf{parse} $X$ \textbf{as} $[\lsem x^{b}_1 \rsem, \ldots, \lsem x^{b}_n \rsem]$ 
				\Statex \hspace{.02cm} \textbf{parse} $Z$ \textbf{as} $[\lsem \vec{x}_1 \rsem, \ldots, \lsem \vec{x}_n \rsem]$
				\Statex \hspace{.02cm} \textbf{parse} $c$ \textbf{as} $ \lsem k^{b} \rsem$
				%\State \textbf{parse} $Z$ \textbf{as} $[\lsem \vec{x}_1 \rsem, \ldots, \lsem \vec{x}_n \rsem]$
				\For {$i := 1$ \textbf{to}  $n$}
					\State $\lsem b_{ii} \rsem \leftarrow \lsem 1 \rsem$
					\For {$j := i+1$ \textbf{to}  $n$} 
						%\State $\lsem b_{ij} \rsem \leftarrow \textsc{SheCompare}(\lsem x_i^b \rsem, \lsem x_j^b \rsem)$
						\State $(\lsem b_{ij} \rsem, \lsem b_{ji} \rsem) \leftarrow \textsc{SheCmp}(\lsem x_i^b \rsem, \lsem x_j^b \rsem)$
						%\State $\lsem b_{ji} \rsem \leftarrow \textsc{SheAdd}(\lsem 1 \rsem, \lsem b_{ij} \rsem)$
					\EndFor
				\EndFor
				\For {$i := 1$ \textbf{to}  $n$} 
						\State $\lsem r_{i}^{b} \rsem \leftarrow \textsc{SheFadder}(\lsem b_{i1} \rsem, \ldots, \lsem b_{in} \rsem)$
				\EndFor
				\For {$i := 1$ \textbf{to}  $n$}
					\State $\lsem \beta_{i} \rsem \leftarrow \textsc{SheEqual}(\lsem r_{i}^{b} \rsem, \lsem k^{b} \rsem)$ \label{compute_f_r}
				\EndFor
				%\State $\lsem x_{i^*} \rsem \leftarrow \lsem 0 \rsem$
				\For {$i := 1$ \textbf{to}  $n$}
					\State $\lsem \vec{y_{i}} \rsem \leftarrow \textsc{SheMult}(\lsem \vec{x}_i \rsem, \lsem \beta_i \rsem)$ \label{inner_product}
				\EndFor
				\State  \Return $\textsc{SheAdd}(\lsem \vec{y_{1}} \rsem, \ldots, \lsem \vec{y_{n}} \rsem)$ 
		\EndFunction
		\end{algorithmic}
		\label{FHE_KRE}
  \hrule
\end{figure}

\begin{figure}[!htb]
  \renewcommand{\figurename}{Protocol}
	\hrule
		\caption{\skeshe~ Protocol}
		\vspace{-2.99mm}
		\hrule
		\begin{algorithmic}[1]
			%\For {$i := 1$ \textbf{to}  $n$} 
					%
			%\EndFor
			\For {$i := 1$ \textbf{to}  $n$}
					\State $C_i \rightarrow S$: $\lsem x^{b}_i \rsem, \lsem \vec{x}_i \rsem$ \label{ciphertext_upload_TH_FHE} 
			\EndFor 
			
			\State $S:$ \textbf{let} $X = [\lsem x^{b}_1\rsem, \ldots, \lsem x^{b}_n \rsem]$ 
			\Statex \hspace{.39cm} \textbf{let} $Z = [\lsem \vec{x}_1 \rsem, \ldots, \lsem \vec{x}_n \rsem]$
			\Statex \hspace{.39cm} \textbf{let} $c = \lsem k^{b} \rsem$

			\State $S:$ $ c' \leftarrow \textsc{ComputeKreShe}(X, Z, c)$ \label{Server_Compute_KRE_TH_FHE}
			\Statex \hspace{.39cm} \textbf{parse} $c'$ \textbf{as} $ \lsem x_{i^*} \rsem $

			%\State $S$: \textbf{let} $\mathbb{I}_t \leftarrow \{1, \ldots, n\}$ be a set of $t$ random indexes \label{Server_Decryptor_Clients_TH_FHE}

			\State $S$: \textbf{let} $\mathbb{I}_t = \{i_1, \ldots, i_t\} \stackrel{\$}{\leftarrow} \{1, \ldots, n\}$ \label{Server_Decryptor_Clients_TH_FHE}

			\ForAll {$i \in \mathbb{I}_t$} 
				\State $S \rightarrow C_i$: $\lsem x_{i^*} \rsem$  \label{Server_Send_KRE_Encrypted_TH_FHE}
			\EndFor
			
			\ForAll {$i \in \mathbb{I}_t$} 
					\State $C_i$: $m^{(i)} \leftarrow \partdec(\skshare_i, \lsem x_{i^*} \rsem)$ \label{Combiner_Rec_KRE_Share_TH_FHE}
					\State $C_i \rightarrow S$: $\lsem m^{(i)} \rsem_1, \ldots, \lsem m^{(i)} \rsem_n$ \label{Client_Send_KRE_Share_TH_FHE}
			\EndFor

			\For {$i := 1$ \textbf{to}  $n$} 
					\State $S \rightarrow C_i$: $\lsem m^{(i_1)} \rsem_i, \ldots, \lsem m^{(i_t)} \rsem_i$ \label{Server_Reveal_KRE_TH_FHE}
					\State $C_i$: $\findec(m^{(i_1)}, \ldots, m^{(i_t)})$ \label{Client_Rec_KRE_TH_FHE}
			\EndFor
		\end{algorithmic}
		\label{FHE_KRE_Protocol}
  \hrule
\end{figure}

\section{Evaluation}
\label{Evaluation}
In this section, we discuss some implementation details and evaluate and compare our schemes.

\subsection{Implementation Details}
\label{Implementation}
We implemented our schemes using software libraries SCAPI \cite{EjgenbergFLL.2012, scapiWeb} and HElib \cite{HaleviS14, HElibWeb} which we briefly describe here.
\paragraph{SCAPI}
We implemented \skeygc, \skeahel, \skeahed~as client-server Java applications while using SCAPI \cite{EjgenbergFLL.2012}. SCAPI is an open-source Java library for implementing SMC protocols. It provides a reliable, efficient, and highly flexible cryptographic infrastructure. It also provides a GC framework including several optimizations such as OT extensions, free-XOR, garbled row reduction \cite{EjgenbergFLL.2012}. This GC framework has been used to run GC comparison circuits in \skeygc. Furthermore, there is a built-in communication layer that provides communication services for any interactive cryptographic protocol. 

\paragraph{HElib}
As \skeshe~ mostly consists of the homomorphic evaluation by the server, we implemented Algorithm \ref{FHE_KRE} using HElib \cite{HaleviS14}. 
HElib is a software library that implements homomorphic encryption. The current version of the library includes an implementation of the BGV lattice-based homomorphic encryption scheme \cite{BrakerskiGV11} . It also includes various optimizations that make homomorphic encryption runs faster, including the Smart-Vercauteren ciphertext packing (SVCP) techniques \cite{SmartV2014}. We also implemented the threshold decryption for a $n$-out-of-$n$ secret sharing of BGV's private key.

\subsection{Experiments}
In this section, we report on the experimental results of our implementations. We start by describing the experimental setup.
\paragraph{Experimental Setup.}
For \skeygc, \skeahel, \skeahed, 
we conducted experiments using for the server a machine with a 6-core Intel(R) Xeon(R) E-2176M CPU @ 2.70GHz and 32GB of RAM, and for the clients two machines with each two Intel(R) Xeon(R) CPU E7-4880 v2 @ 2.50GHz. The client machines were equipped with 8GB and 4 GB of RAM, and were connected to the server via WAN. Windows 10 Enterprise was installed on all three machines.. Windows 10 Enterprise was installed on all three machines.
For each experiment, about 3/5 of the clients were run on the machine with 8 GB RAM while about 2/5 were run on the machine with 4 GB RAM. We ran all experiments using JRE version 8. 

Since the main computation of \skeshe~ is done on the server, we focus on evaluation Algorithm \ref{FHE_KRE} on a Laptop with Intel(R) Core(TM) i5-7300U CPU @ 2.60GHz running 16.04.1-Ubuntu with 4.10.0-14-lowlatency Kernel version. %\\
%\subsection{Results}
\paragraph{Results.}
\begin{figure}[tbp]
    \centering
		\begin{subfigure}[t]{0.450\textwidth}
		%\subfloat[Computation Cost in ms]{
        \includegraphics[width=55mm]{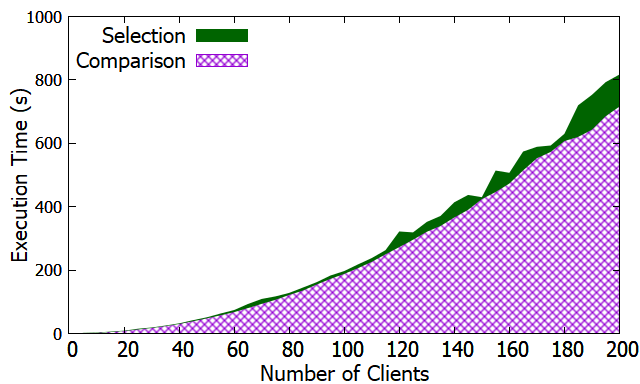}
        \caption{\scriptsize Computation Cost \skeygc}
        \label{Graph_CLT_SKE_YGC_AHE_Computation_Cost}
    %}
		\end{subfigure}
    \begin{subfigure}[t]{0.450\textwidth}
		%\subfloat[Communication Cost in kb]{
        \includegraphics[width=55mm]{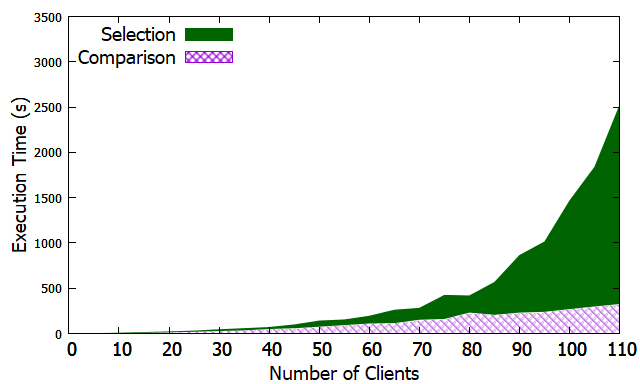}
        \caption{\scriptsize Computation Cost \skeahel}
        \label{Graph_CLT_SKE_AHE1_Computation_Cost}
    %}
		\end{subfigure}
    \begin{subfigure}[t]{0.45\textwidth}
		%\subfloat[Communication Cost in kb]{
        \includegraphics[width=55mm]{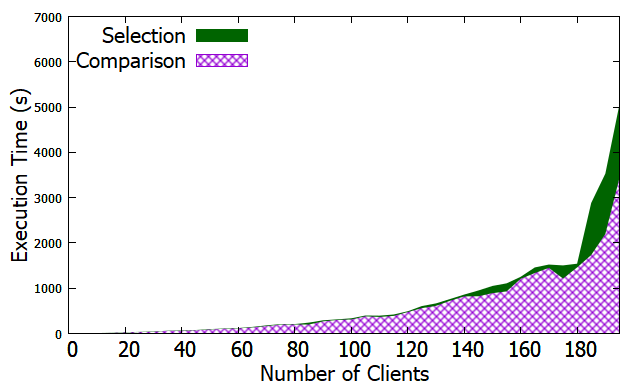}
        \caption{\scriptsize Computation Cost \skeahed}
        \label{Graph_CLT_SKE_AHE2_Computation_Cost}
    %}
		\end{subfigure}
		\begin{subfigure}[t]{0.45\textwidth}
		%\subfloat[Communication Cost in kb]{
        \includegraphics[width=55mm]{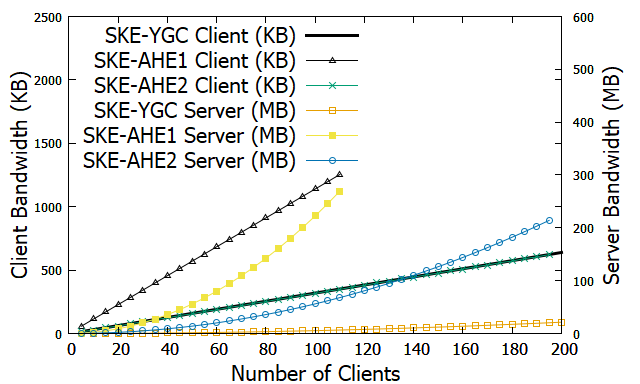}
        \caption{\scriptsize Communication Cost in MB}
        \label{Graph_CLT_SKE_AHE2_Computation_Cost0}
    %}
		\end{subfigure}
		%\vspace{-7mm}
    \caption{Results for \skeygc, \skeahel, \skeahed}
		\label{Graph_Computation_Costs}
\end{figure}
\begin{figure}[tbp]
    \centering
		\begin{subfigure}[t]{0.450\textwidth}
		%\subfloat[Computation Cost in ms]{
        \includegraphics[width=55mm]{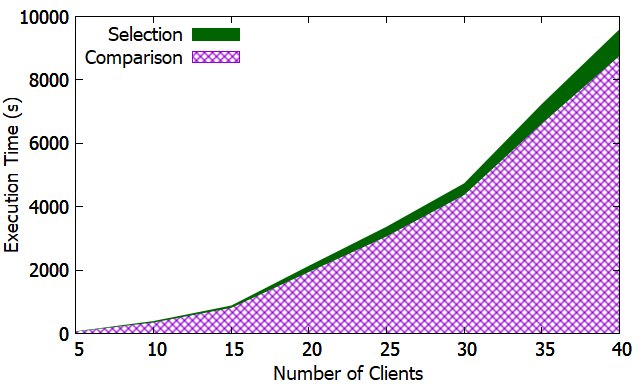}
        \caption{\scriptsize Server Computation Cost}
        \label{Graph_SVR_SKE_SHE_Computation_Cost}
    %}
		\end{subfigure}
    \begin{subfigure}[t]{0.450\textwidth}
		%\subfloat[Communication Cost in kb]{
        \includegraphics[width=55mm]{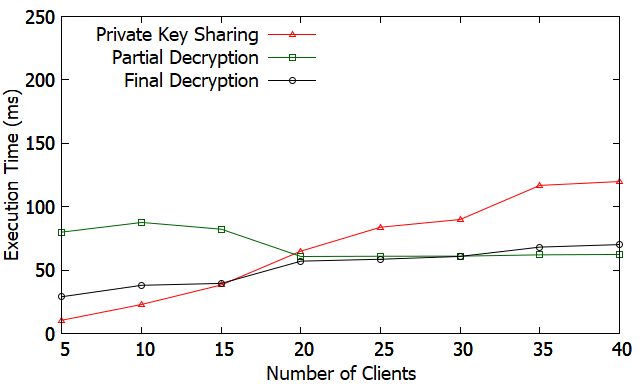}
        \caption{\scriptsize Threshold Decryption Cost}
        \label{Graph_SKE_SHE_THDEC_Cost}
    %}
		\end{subfigure}
    \caption{Performance Results \skeshe}
		\label{Graph_SKESHE_Costs}
\end{figure}

We evaluated \skeygc, \skeahel, \skeahed~at security level $\secprm = 128$, bitlength $\inputlen=32$ and (minimal) threshold $t=2$ for threshold decryption.
We instantiated \skeahel~and \skeahed~with Elliptic Curve ElGamal (see Appendix \ref{sec:EcElGamal}) using elliptic curve \texttt{secp256r1}.
Figure 8 shows our performance results which are summarized in Table \ref{Comparision_YGC_DGK_LTT} for $n=100$ clients.
In Table \ref{Comparision_YGC_DGK_LTT}, we also illustrate the costs when $t=1$ (i.e., each $C_i$ knows $sk$) for both \skeahed~and \skeahel~and when $t = n$ (i.e. all $C_i$ must participate in the threshold decryption) for \skeahed.

\skeygc~is the most efficient in both computation and communication and takes 197 seconds to each client to compute the KRE of 100 clients in a WAN setting. 
The communication is 0.31 MB for each client and 5.42 MB for the server. However, \skeygc~is neither collusion-resistant nor fault-tolerant. 

\skeahed~is the second most efficient and is collusion-resistant and fault-tolerant. Although it requires more interactions to compute comparisons, we batched many comparisons together
and were able to run threshold decryption for $O(n)$ elements, instead of $O(n^2)$ as in \skeahel. 
The computation of the KRE of 100 values takes to each client 353 seconds (for $t=1$), 336 seconds (for $t=2$) and 441 seconds (for $t=100$). 
The communication is 0.3 MB, 0.3 MB, 0.32 MB for each client and 56.07 MB, 56.12 MB, 60.56 MB for the server when $t=1, 2, 100$, respectively.

While being  collusion-resistant and fault-tolerant as well, \skeahel~is less efficient than \skeygc~and \skeahed. The computation of the KRE of 100 values takes to each client 1024 seconds (for $t=1$), 1749 seconds (for $t=2$).
The communication is 0.56 MB, 1.11 MB for each client and 111.37 MB, 222.67 MB for the server when $t=1, 2$, respectively. For $t=100$, our testbed ran out of memory. 

\begin{table}[tbp]
\centering
\scriptsize
\begin{tabular}{| l | r | r | r | r | r | r | r |}
\hline
\multicolumn{1}{ |l|}{} 		                & \multicolumn{1}{|c|}{\skeygc} & \multicolumn{3}{|c|}{\skeahed}  & \multicolumn{2}{|c|}{\skeahel}  \\ \hline
\multicolumn{1}{ |l|}{$t$}         & n/a     & 1 & 2 & 100 & 1 & 2 \\ \hline
\multicolumn{1}{ |l|}{Time (s)}    & 197.00 &  353.00  &   336.00 & 441.00 & 1024.00 & 1749.00\\
\multicolumn{1}{ |l|}{C-Bits (MB)} & 0.31 &  0.30   & 0.30 & 0.32 & 0.56 & 1.11\\
\multicolumn{1}{ |l|}{S-Bits (MB)} &  5.42 &   56.07  & 56.12 & 60.56 & 111.37 & 222.67\\
\hline
\end{tabular}
\caption{ \scriptsize{Performance Comparison for 100 clients: C-Bits (resp. S-Bits) denotes the number of bits sent by each client (resp. the server). $t$ is the secret sharing threshold, i.e., the number of clients that must contribute to the treshold decryption.}}
\label{Comparision_YGC_DGK_LTT}
\end{table}

We evaluated Algorithm \ref{FHE_KRE} of \skeshe~ at security level at least 110.
The result is illustrated in Figure \ref{Graph_SVR_SKE_SHE_Computation_Cost} for inputs with bitlength $\inputlen=16$. The computation is dominated by the inputs' comparison and takes less than one hour for 25 clients. We also evaluated in Figure \ref{Graph_SKE_SHE_THDEC_Cost} the performance of the threshold decryption with a $n$-out-of-$n$ secret sharing. For up to 40 clients threshold decryption costs less than 0.15 second. \skeshe~is practically less efficient than all other schemes, but has the best asymptotic complexity.

As a result \skeygc~is suitable for a setting where the server is non-colluding and clients cannot fail. If collusion and failure are an issue, then either \skeahel~or \skeahed~ or even \skeshe~is suitable. \skeahel~can be more time efficient than \skeahed~for up to 30 clients and a highly parallelizable server. \skeshe~ has the best asymptotic complexity, however, it requires more efficient somewhat homomorphic encryption schemes. 

\section{Conclusion}
\label{Conclusion}
In this paper we considered the problem of computing the KRE (with applications to benchmarking) of $n$ clients' private inputs using a server.   
The general idea of our solution is to sort the inputs, compute the rank of each input, use it to compute the KRE. The computation is supported by the server which coordinates the protocol and undertakes as much computations as possible. We proposed and compare different approaches based on garbled circuits or threshold HE. The server is oblivious, and does not learn the input of the clients. We also implemented and evaluated our schemes. 
As a result \skeygc~is suitable for a setting where the server is non-colluding and clients cannot fail. If collusion and failure are an issue, then either \skeahed~ or \skeshe~is suitable. \skeshe~ has the best asymptotic complexity, however, it requires more efficient somewhat homomorphic encryption schemes.

%\section*{Acknowledgment}
%The authors would like to thank the anonymous reviewers for their valuable comments and helpful suggestions, and Andreas Fischer (SAP Security Research) for his implementation of Elliptic Curve ElGamal. 
%This work has received funding from the European Union's Seventh Framework Programme and Horizon 2020 Research and Innovation Programme under grant agreements No. 609611, No. 644579 and No. 644412 of the PRACTICE, ESCUDO-CLOUD and TREDISEC projects.

\bibliographystyle{splncs04}
\bibliography{SecureMedianEPrint}

\appendix

\section{ElGamal Encryption}
\label{sec:EcElGamal}
The threshold decryption in \skeahel, \skeahed~has be implemented using elliptic curve ElGamal (ECE) \cite{Koblitz1987}.
We briefly present ECE and its threshold decryption \cite{BonehShoup.2017}.

Let $\mathbb{G}$ be an elliptic curve group generated by a point $P$ of prime order $p$. The key generation
chooses $s \stackrel{\$}{\leftarrow} \mathbb{Z}_p$ and outputs $sk = s$ and $pk = s\cdot P$ as private and public key.
To encrypt an integer $m$, one chooses $r \stackrel{\$}{\leftarrow} \mathbb{Z}_p$ and outputs the ciphertext $c = (r\cdot P, m\cdot P + r\cdot pk)$.
To decrypt a ciphertext $c = (\ctxtcmt_1, \ctxtcmt_2)$, one computes $Q = \ctxtcmt_2 - \ctxtcmt_1\cdot sk$ and solves the discrete logarithm on $\mathbb{G}$.

\noindent Let $n, t$ be integers such that $t \leq n$. To support $t$-out-of-$n$ threshold decryption the secret key $sk=s$ is secret-shared using Shamir secret sharing scheme \cite{Shamir79} by choosing a random polynomial: 
$f(x) = s + \sum_{i = 1}^{t-1} a_{i}x^{i}$
and computing secret shares $s_i = f(i), 1\leq i \leq n$. 
Let $L_i(x)$ and $l_i$ be defined as:
$L_i(x) = \prod_{j=1, j\neq i}^{t} \frac{x-j}{i-j} ~\mbox{and}~ l_i = L_i(0) = \prod_{j=1, j\neq i}^{t} \frac{-j}{i-j}.$
Given $t$ shares $(1, s_1), \ldots, (t, s_t)$, the polynomial $g(x) = \sum_{i=1}^{t}L_i(x)\cdot s_i$ is the same as $f(x)$ since both have degree at most $t-1$ and match at $t$ points. Therefore $s = f(0) = g(0) = \sum_{i=1}^{t}s_i \cdot l_i$. The numbers $l_i$ are called \emph{Lagrange coefficients}.
The threshold key generation outputs secret key shares $\skshare_i = (s_i, l_i), 1\leq i \leq n$.
Let $\mathbb{I}_t \subseteq \{1, \ldots, n\}$ be a subset of $t$ clients and assume for simplicity $\mathbb{I}_t = \{1, \ldots, t\}$. To decrypt a ciphertext $c = (\ctxtcmt_1, \ctxtcmt_2)$ each client $C_i, i\in \mathbb{I}_t$ computes $m_i = \ctxtcmt_1 \cdot s_i \cdot l_i$. 
Then the combiner receives all $m_i$, computes $\ctxtcmt_2 - \sum_{i=1}^{t}m_i = \ctxtcmt_2 - \ctxtcmt_1 \cdot  \sum_{i=1}^{t} s_i \cdot l_i = \ctxtcmt_2 - \ctxtcmt_1 \cdot s = Q$ and solve the discrete logarithm on $\mathbb{G}$. 
This requires $O(1)$ to each client $C_i, i \in \mathbb{I}_t$ and $O(\log t)$ asymmetric operations to the combiner.

\section{Security Proofs}
\label{security_proofs}
Let the inherent leakage be $\mathcal{L} = \{k, n, t, \ctxtlen, \secprm, \inputlen\}$, i.e., protocol's parameters.

\begin{theorem}
\label{SKE_YGC_Security}
If the server $S$ is non-colluding and the AHE scheme is IND-CPA secure, then
\skeygc~1-privately computes $\func{F}{\kre}$ in the semi-honest model with leakage $\mathcal{L}_S = \mathcal{L}_i = \mathcal{L}$. Hence, there are simulators $\simulator_{C_i}$ for each $C_i$ and~ $\simulator_S$ for $S$ such that:
$$\simulator_{S}(\emptyset, \mathcal{L}_S) \stackrel{c}{\equiv} \view{S}{}(x_1, \ldots, x_n)~\mbox{and}$$
$$\simulator_{C_i}(x_{i}, \func{F}{\kre}(x_1, \ldots, x_n), \mathcal{L}_i) \stackrel{c}{\equiv} \view{C_i}{}(x_1, \ldots, x_n).$$

\end{theorem}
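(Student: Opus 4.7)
The plan is to construct two families of simulators, one for the corrupted server and one for each corrupted client, and to establish indistinguishability via a hybrid argument that invokes the privacy of the garbling scheme together with the IND-CPA security of the AHE. The non-colluding-server assumption is crucial here: for every pair $(i,j)$ with $\paired(i,j)=true$, at least one of the masks $\bb^{ij}_i, \bb^{ji}_j$ is never held by the adversary, so the blinded bit $b'_{ij} = \bb^{ij}_i \oplus \bb^{ji}_j \oplus b_{ij}$ delivered by the GC is uniformly distributed over $\{0,1\}$ from the adversary's viewpoint; this is what lets us replace it with a fresh random bit during simulation.

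For $\simulator_S$: the server has no input and no output, so the whole transcript will be manufactured from $\mathcal{L}_S$ alone. For each $\paired(i,j)$ pair, $\simulator_S$ samples $b'_{ij} \stackrel{\$}{\leftarrow} \{0,1\}$ and invokes the garbling-scheme simulator on target output $b'_{ij}$ to obtain $F^{ij}_{>}$ and the garbled inputs $(\bar{\bb}^{ij}_i, \bar{x}^{ij}_i), (\bar{\bb}^{ij}_j, \bar{x}^{ij}_j)$. Every AHE ciphertext that appears in the server's view (the four unblinding ciphertexts per pair, the ranks-related $\lsem \beta_i \rsem_i$, and the $\lsem m_i \rsem_j$ at the end) is fabricated as a fresh encryption of $0$ under the appropriate public key. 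Indistinguishability is proved by a standard chain of hybrids: first replace the real GCs by simulated ones using GC privacy, then flip each AHE ciphertext to an encryption of $0$ one by one using IND-CPA.

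For $\simulator_{C_i}$ (given $x_i$, the output $\tilde{x}_k$, and $\mathcal{L}_i$): first replay $C_i$'s honest actions and random tape, which fixes the masks $\bb^{ij}_i$ that $C_i$ itself generates. For every $b'_{ij}$ that $S$ delivers back to $C_i$, the simulator returns a uniformly random bit. Every AHE ciphertext appearing in $C_i$'s view under a key $pk_j \neq pk_i$ is replaced by an encryption of $0$. The one non-trivial plaintext that $C_i$ decrypts before the final round is $\beta_i = (r_i - k)\alpha_i \bmod N$: $\simulator_{C_i}$ sets $\beta_i := 0$ when $x_i = \tilde{x}_k$ and otherwise samples $\beta_i$ uniformly from the plaintext space. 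The final ciphertext is handed to $C_i$ as a fresh encryption of $\tilde{x}_k$ under $pk_i$, matching the real $\lsem \sum_j m_j \rsem_i = \lsem \tilde{x}_k \rsem_i$.

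The main obstacle is the simulation of $\beta_i$ in the case $r_i \neq k$. In the real protocol $\beta_i = (r_i-k)\alpha_i \bmod N$ with $\alpha_i$ uniform in $\mathbb{Z}_N$ and $|r_i-k| < n$, so the real distribution is uniform over the multiples of $\gcd(r_i-k, N)$; since $N$ is the product of two large primes and $|r_i-k| \ll N$, this gcd is $1$ with overwhelming probability, so $\beta_i$ is statistically close to uniform over $\mathbb{Z}_N$, matching the simulator. Once this statistical step is in place, the remaining hybrids (GC privacy for each garbled circuit, and IND-CPA over the polynomially many ciphertexts in each view) compose in the standard way and deliver the two $\stackrel{c}{\equiv}$ statements in the theorem.
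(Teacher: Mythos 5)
Your proof is correct and follows the same high-level strategy as the paper's sketch -- simulate each party's view message-by-message and reduce to GC privacy and IND-CPA -- but you fill in several details that the paper's two-line sketch glosses over, and one of them genuinely matters. The paper's proof simply says that for each message $m$ in a party's view the simulator outputs a random string of length $|m|$; as stated, that is not literally adequate for $\lsem \beta_i \rsem_i$ in the client's view, since $C_i$ decrypts it and obtains $\beta_i = 0$ exactly when $x_i$ is the output, nor for the final ciphertext, which decrypts to the actual KRE. Your simulator correctly conditions on whether $x_i = \tilde{x}_k$, and your statistical argument for the $r_i \neq k$ case is sound -- in fact stronger than you claim: since $0 < |r_i - k| < n \ll p, q$, the gcd with $N$ is $1$ with certainty, so $(r_i-k)\alpha_i$ is exactly uniform over $\mathbb{Z}_N$ when $\alpha_i$ is. You also make explicit two things the paper leaves implicit: that simulating the garbled circuits and garbled inputs as random strings requires invoking the garbling scheme's privacy simulator (the paper's theorem statement does not even list a security assumption on the garbling scheme, only on the AHE), and that the uniformity of each $b'_{ij}$ rests on the non-colluding-server assumption guaranteeing that at least one of the two masks is hidden from the adversary. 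The hybrid composition at the end is standard and correct. In short, your write-up is a faithful but properly rigorous version of the paper's argument and could replace it.
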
 
\begin{proof}[Proof (Sketch)]
The leakage is clear as parties see only random strings. $\view{C_i}{}$ consists of: 
\begin{align}
(F^{ij}_{>}, (\bar{b}_{i}^{ij}, \bar{x}_i^{ij}), b'_{ij}, 
\lsem b^{ji}_j \oplus b_{ij} \rsem_i, \lsem b^{ji}_i \oplus b_{ij} \rsem_j, \lsem b_{ij} \rsem_j)_{1\leq j \leq n (i\neq j)}, %\nonumber \\
\lsem \beta_i \rsem_i, \beta_i. \nonumber 
\end{align}
For each $m \in \view{C_i}{}$, $\simulator_{C_i}$ chooses random bit strings of length $|m|$.
The view of the server consists of:
\begin{align}
\langle F^{ij}_{>}, (\bar{b}_{i}^{ij}, \bar{x}_i^{ij}), (\bar{b}_{j}^{ij}, \bar{x}_j^{ij}), b'_{ij}, %\nonumber \\
\lsem b^{ji}_j \oplus b_{ij} \rsem_i, \lsem b^{ji}_i \oplus b_{ij} \rsem_j,  
\lsem b^{ij}_i \oplus b_{ij} \rsem_j, \lsem b^{ij}_j \oplus b_{ij} \rsem_i, \nonumber \\ 
\lsem b_{ij} \rsem_j, \lsem b_{ij} \rsem_i \rangle_{\paired(i,j)=true}, 
\langle \lsem r_i \rsem_i, \lsem \beta_i \rsem_i \rangle_{1\leq i \leq n}, \langle \lsem m_i \rsem_j
\rangle_{1 \leq i,j \leq n}. \nonumber
\end{align}
For each $m \in \view{S}{}$, $\simulator_{S}$ chooses random bit strings of length $|m|$.
\end{proof}

\begin{theorem}
\label{SKE_AHE1_Security}
Let $t \in \mathbb{N}$ and $\tau < t$. If the server $S$ is non-colluding and the AHE scheme is IND-CPA secure, then  \skeahel~and \skeahed~$\tau$-privately compute $\func{F}{\kre}$ in the semi-honest model with leakage $\mathcal{L}_S = \mathcal{L}_i = \mathcal{L}$. Hence,  let $I = \{i_{1}, \ldots, i_{\tau}\}$, $\mathcal{L}_{I} = \bigcup_{i\in I} \mathcal{L}_i$, there exists a simulator $\simulator_{I}$ such that:
$$\simulator_{I}((x_{i_1}, \ldots, x_{i_{\tau}}), \func{F}{\kre}(x_1, \ldots, x_n), \mathcal{L}_{I}) \stackrel{c}{\equiv} \view{I}{}(x_1, \ldots, x_n).$$
\end{theorem}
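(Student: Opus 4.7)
The plan is to construct $\simulator_I$ by a hybrid argument that walks, step by step, from the real joint view $\view{I}{}(x_1, \ldots, x_n)$ of the $\tau < t$ corrupted clients to a distribution sampleable from only $(x_{i_1}, \ldots, x_{i_\tau})$, the output $\func{F}{\kre}(\vec{x})$, and the public leakage $\mathcal{L}_I$. The enabling fact is that the server is non-colluding, so the coalition never observes partial decryptions from more than $\tau$ honest parties; together with the $t$-out-of-$n$ threshold property this makes every ciphertext under the common key $pk$ ``opaque'' to $I$, and IND-CPA then lets us replace plaintexts by zero. For comparison pairs $(x_i, x_j)$ with both endpoints in $I$, the relevant comparison bit is a deterministic function of the coalition's own inputs, so it does not need to be put into $\mathcal{L}_I$ explicitly.

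First I would write down the simulator. For every ciphertext under $pk$ that originates from an honest client $C_j$ (the $\lsem x_j \rsem, \lsem V_{x_j}^0 \rsem, \lsem V_{x_j}^1 \rsem$ of \skeahel, and the $\lsem x_j \rsem$ together with the XOR-share $\lsem \delta_{jj'} \oplus \delta_{j'j} \rsem$ of \skeahed), $\simulator_I$ substitutes a fresh encryption of $0$. The same substitution is applied to every ciphertext under an honest $pk_j$, namely the DGK messages $\lsem x_j^b \rsem_j$ and $\lsem z_u \rsem_j$, the re-encrypted partial decryptions $h^{(j)}_{\cdot\cdot}$, and the final re-encryptions $\lsem m^{(j)} \rsem_\cdot$. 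Values a corrupted $C_i$ is entitled to see in the clear are produced directly: each $\delta_{ij}$ held after running DGK with an honest $C_j$ is sampled uniformly at random as in the real protocol, and the KRE returned in the final step of each protocol is set to $\func{F}{\kre}(\vec{x})$. Shares produced by honest parties in the threshold decryptions are drawn from the threshold-cryptosystem simulator on the appropriate target plaintext (zero for intermediate decryptions, and the known KRE for the final one). The permutations $\pi, \pi_0, \ldots, \pi_n$ and the index sets $I^{(i)}$ returned by $\textsc{DecReq}$ are drawn from the same uniform/combinatorial distribution as in the protocol; Lemma \ref{TH_Dec_Ciphertext_Distribution} guarantees this distribution is input-independent and thus sampleable from $\mathcal{L}_I$.

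The hybrid argument then proceeds in four stages. Stage (i) replaces the partial decryption shares of honest clients (both for intermediate ciphertexts in \skeahel~and for the final KRE ciphertext in both protocols) by simulated shares consistent with the target plaintext. Stage (ii) replaces, ciphertext by ciphertext, every $pk$-encryption of honest-party data by an encryption of $0$, invoking IND-CPA of the base AHE in each sub-hybrid. Stage (iii) does the same for honest $pk_j$-encryptions, again by IND-CPA. Stage (iv) observes that after the previous stages the coalition's view no longer depends on honest inputs except through distributions already fixed by $\mathcal{L}_I$ and the output, so it matches the output of $\simulator_I$ by construction.

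The main obstacle is stage (i): the coalition observes, for each threshold-decrypted ciphertext, the partial shares produced by the participating honest combiners, and these shares must be jointly consistent with $\findec$ returning the correct plaintext. The required property is that, because $|I| = \tau < t$, the joint distribution of the corrupted clients' shares together with any $t - \tau$ honest shares is determined by the target plaintext and the coalition's key material alone. This is precisely the share-simulatability enjoyed by Shamir secret sharing, and for the threshold EC-ElGamal recalled in Appendix \ref{sec:EcElGamal} it reduces to Lagrange interpolation on evaluation points unknown to $I$. Once this lemma is invoked, stages (ii) and (iii) are standard IND-CPA reductions, and their composition yields the required $\simulator_I(\cdots) \stackrel{c}{\equiv} \view{I}{}(\vec{x})$.
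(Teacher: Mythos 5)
There is a genuine gap in Stage (i): you set the target plaintext of the \emph{intermediate} threshold decryptions to zero, but those are exactly the decryptions whose results a corrupted combiner sees in the clear, and they are not zero. In \skeahel, a corrupted combiner $C_l$ ($l \in I$) receives the $t$ partial shares for its row of $G'$, runs $\findec$, and recovers the actual \textsc{LinCompare} outputs: for each $j$, a vector of $\inputlen$ values of which exactly one is $0$ iff the (permuted) comparison holds. From these it computes genuine comparison bits $b_{l1},\ldots,b_{ln}$, a rank $r_l$, and the predicate $r_l = k$ (Steps \ref{Client_Rec_Comp_Res_AHE1}--\ref{Client_Send_KRE_AHE1}). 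If your simulator forces every such decryption to $0$, every position decrypts to zero, every comparison reads as ``greater,'' and every corrupted combiner computes $r_l = n$ --- trivially distinguishable from the real view. The same problem recurs in \skeahed: the threshold-decrypted $y_j = \lsem (r_j-k)\alpha_j + x_j\rsem$ opens to the KRE for exactly one combiner and to a random value of bitlength exceeding $\inputlen$ for the others, so neither ``zero'' nor ``the known KRE'' is the right target for a given corrupted combiner.

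The fix is the one the paper uses: first sample the \emph{decrypted value} from its real-view distribution (a uniform comparison bit per permuted pair, realized in \skeahel~as a vector with a zero at a uniform position or no zero at all; in \skeahed~either the KRE with probability $1/n$ or a random long plaintext), and only then invoke share-simulatability to produce $t$ honest partial-decryption shares consistent with that sampled value --- concretely, for EC-ElGamal, pick the output $b$ and $\tilde{\ctxtcmt}_2$, set $\tilde{\ctxtcmt}_1 = \tilde{\ctxtcmt}_2 - b\cdot P$, and split $\tilde{\ctxtcmt}_1$ into $t$ random summands. Your consistency observation and the appeal to Lagrange interpolation are exactly right; what is missing is the correct specification of the plaintext those shares must reconstruct to, and an accounting of the fact that a corrupted combiner learns a (permutation-hidden) rank and the bit $[r_l = k]$, which must either be argued simulatable from the output or charged to $\mathcal{L}_I$. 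Everything else --- replacing honest ciphertexts under $pk$ and under honest $pk_j$ by encryptions of $0$ via IND-CPA hybrids, and sampling $\delta_{ij}$ uniformly --- matches the paper's argument and is sound.
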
 
\begin{proof}[Proof (Sketch)]
The leakage is clear as parties see only random strings (IND-CPA ciphertexts, random shares or random bits). 

In \skeahel~, all messages can be simulated by choosing random bit strings of the corresponding length. 
However, the simulation of Step \ref{Server_forward_Th_Dec_AHE1} must be coherent with Step \ref{Client_Rec_Comp_Res_AHE1}. Each client receives random shares in Step \ref{Server_forward_Th_Dec_AHE1}, runs the final decryption $\findec(.)$ in Step \ref{Client_Rec_Comp_Res_AHE1} and learns a random bit. Let $C_l$ be a client with $l \in I$. To simulate Steps \ref{Server_forward_Th_Dec_AHE1} and \ref{Client_Rec_Comp_Res_AHE1}, the simulator chooses $t$ random values for Step \ref{Server_forward_Th_Dec_AHE1} such that running $\findec(.)$ returns the random bit simulated in Step \ref{Client_Rec_Comp_Res_AHE1}.

For example, if the underlying AHE is ECC ElGamal (ECE), then a ciphertext has the form $c = (\ctxtcmt_1, \ctxtcmt_2) = (r\cdot P, m\cdot P + r\cdot pk)$.
For each ECE ciphertext $c = (\ctxtcmt_1, \ctxtcmt_2) = (r\cdot P, m\cdot P + r\cdot pk)$ that must be decrypted in Step \ref{Client_Rec_Comp_Res_AHE1}, $C_l$ gets $\ctxtcmt_2$ and $t$ partial decryption results $\ctxtcmt_{11}, \ldots, \ctxtcmt_{1t}$ of $\ctxtcmt_1$ in Step \ref{Server_forward_Th_Dec_AHE1}. To simulate this,  the simulator chooses a random bit $b$ and a random $\tilde{\ctxtcmt}_2$. Then it computes $\tilde{\ctxtcmt}_1 = \tilde{\ctxtcmt}_2 - b \cdot P$ and generates random $\tilde{\ctxtcmt}_{11}, \ldots, \tilde{\ctxtcmt}_{1t}$ such that $\sum_{i=1}^{t} \tilde{\ctxtcmt}_{1i} = \tilde{\ctxtcmt}_1$ in $\mathbb{G}$.  

The proof for \skeahed~ is similar. 
Ciphertexts and random shares are simulated with equally long random strings and Steps \ref{Forward_to_combiner} and \ref{combiner_Select_KRE} in \skeahed~ are simulated as above for Steps \ref{Server_forward_Th_Dec_AHE1} and \ref{Client_Rec_Comp_Res_AHE1} in \skeahel.
\end{proof}

\begin{theorem}
\label{SKE_SHE_Security}
Let $t \in \mathbb{N}$ and $\tau < t$. If the server $S$ is non-colluding and the SHE scheme is IND-CPA secure, then  \skeshe~$\tau$-privately computes $\func{F}{\kre}$ in the semi-honest model with leakage $\mathcal{L}_S = \mathcal{L}_i = \mathcal{L}$. Hence,  let $I = \{i_{1}, \ldots, i_{\tau}\}$ denote the indexes of corrupt clients, $\mathcal{L}_{I} = \bigcup_{i\in I} \mathcal{L}_i$ denote their joint leakages and $\view{I}{}(x_1, \ldots, x_n)$ denote their joint views, there exists a simulator $\simulator_{I}$ such that:
$$\simulator_{I}((x_{i_1}, \ldots, x_{i_{\tau}}), \func{F}{\kre}(x_1, \ldots, x_n), \mathcal{L}_{I}) \stackrel{c}{\equiv} \view{I}{}(x_1, \ldots, x_n).$$
\end{theorem}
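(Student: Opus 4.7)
The plan is to construct a simulator $\simulator_I$ that, on input $(x_{i_1}, \ldots, x_{i_\tau})$, the KRE value $y = \func{F}{\kre}(x_1, \ldots, x_n)$, and the leakage $\mathcal{L}_I = \mathcal{L}$, produces a transcript computationally indistinguishable from the joint view $\view{I}{}(x_1, \ldots, x_n)$. First I would enumerate the messages a corrupt client $C_i$ with $i \in I$ actually receives in Protocol \ref{FHE_KRE_Protocol}: the SHE ciphertext $\lsem x_{i^*} \rsem$ of the KRE if $i \in \mathbb{I}_t$ (Step \ref{Server_Send_KRE_Encrypted_TH_FHE}), and the list $(\lsem m^{(i_1)} \rsem_i, \ldots, \lsem m^{(i_t)} \rsem_i)$ of re-encrypted partial decryption shares (Step \ref{Server_Reveal_KRE_TH_FHE}). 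Everything else, namely its own input $x_i$, random coins $r_i$, and outgoing ciphertexts $\lsem x_i^b \rsem, \lsem \vec{x}_i \rsem$, is locally derivable.

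The simulator proceeds as follows. For each $i \in I$ it samples fresh coins $r_i$ and computes $\lsem x_i^b \rsem, \lsem \vec{x}_i \rsem$ honestly from $x_i$, then picks $\mathbb{I}_t \subseteq \{1, \ldots, n\}$ uniformly of size $t$ exactly as the honest server would. It encrypts $y$ under the joint public key $pk$ to obtain a simulated $\lsem x_{i^*} \rsem$. For the partial decryption shares, $\simulator_I$ invokes the share-simulator of the $(t,n)$-threshold decryption scheme: given the target plaintext $y$, the ciphertext $\lsem x_{i^*} \rsem$, and the at most $\tau < t$ key shares $\{\skshare_i : i \in I\}$ in its possession, it outputs $t$ simulated partial decryptions $\{\tilde{m}^{(i_j)}\}_{j=1}^t$ whose combination via $\findec$ yields $y$ and whose joint distribution with the adversarial shares matches the real protocol. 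Each $\tilde{m}^{(i_j)}$ is then encrypted under every $pk_i$ for $i \in I$, completing the simulated inbound list of Step \ref{Server_Reveal_KRE_TH_FHE}.

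Indistinguishability I would establish by a short hybrid argument. Hybrid $H_0$ is the real view; $H_1$ replaces the honestly-computed $\lsem x_{i^*} \rsem$ by a fresh encryption of $y$, which is identically distributed since both encrypt the same plaintext under the randomized SHE encryption. In $H_2$ I substitute the real partial decryptions of the honest combiners $\mathbb{I}_t \setminus I$, a set of size at least $t - \tau \geq 1$, by the simulated $\tilde{m}^{(i_j)}$'s; indistinguishability reduces exactly to the simulation property of the threshold SHE scheme against $\tau < t$ corruptions. Finally, the re-encryptions under the corrupt clients' own keys $pk_i$ are simulated honestly on the simulated plaintexts, and IND-CPA of the SHE scheme guarantees that this outer encryption layer leaks no additional information.

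The main obstacle is the threshold share simulation in the SHE setting: unlike plain Shamir, partial decryptions in BGV-style threshold decryption carry smudging noise, and the simulator must sample this noise so that the joint distribution of (adversarial shares, simulated shares, final decryption) is close to the honest one. I would handle this by invoking the threshold-FHE decryption simulator of \cite{BonehGGJKRS18}, which the paper already adopts as its THE primitive and which is tailored for precisely this setting, yielding statistical closeness under any corruption bound strictly below $t$. The remaining bookkeeping (subset sampling for $\mathbb{I}_t$, re-encryption under the $pk_i$'s, and reading off $\mathcal{L}$) follows the template of Theorems \ref{SKE_YGC_Security} and \ref{SKE_AHE1_Security}.
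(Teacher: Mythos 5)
Your construction follows the same route as the paper's proof, which is only a three-sentence sketch asserting that all received messages are IND-CPA ciphertexts or partial decryption results and that the computation is done by the server alone; you supply the explicit view enumeration, the simulator, the hybrids, and---most importantly---the one genuinely non-trivial ingredient the paper leaves implicit, namely that the honest combiners' partial decryption shares cannot be treated as independent random strings but must be simulated consistently with the output $y$ via the threshold-decryption share simulator (with smudging noise) of \cite{BonehGGJKRS18}. That identification is correct and is the right way to make the paper's argument rigorous.

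One step would fail as justified. In hybrid $H_1$ you replace the ciphertext $\lsem x_{i^*} \rsem$ delivered in Step \ref{Server_Send_KRE_Encrypted_TH_FHE} by a fresh encryption of $y$ and claim the two are \emph{identically distributed} because they encrypt the same plaintext. They are not: in the real protocol $\lsem x_{i^*} \rsem$ is the output of the deep homomorphic evaluation of Algorithm \ref{FHE_KRE}, and an evaluated BGV ciphertext has a noise distribution (and modulus level) entirely different from a fresh one; IND-CPA relates fresh encryptions of different plaintexts, and says nothing about evaluated versus fresh ciphertexts of the same plaintext. The standard repair, which stays inside your own framework, is either (i) to invoke circuit privacy or noise flooding on the server's output ciphertext, or (ii) to have the simulator run \textsc{ComputeKreShe} itself on fresh encryptions of dummy inputs in place of the honest clients' inputs---indistinguishable from the real evaluated ciphertext by a hybrid over the honest clients' input ciphertexts using IND-CPA, since the corrupt coalition holds only $\tau < t$ key shares and cannot decrypt---and then let the share simulator of your $H_2$ force the final decryption to output $y$ even though the simulated ciphertext no longer encrypts it. Option (ii) is exactly the setting the \cite{BonehGGJKRS18} simulator is designed for, so your $H_2$ already carries the load; only the justification of $H_1$ needs to change.
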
 
\begin{proof}
The leakage is clear as parties see only random strings (IND-CPA ciphertexts or partial decryption results). 
The security is also straightforward as the computation is almost completely done by the server alone and encrypted under an IND-CPA encryption. Moreover, the partial decryption reveals only partial result to each decryptor. 
\end{proof}

Recall that our adversary is semi-honest. 
In \skeygc, a server collusion reveals all inputs to the adversary. In \skeahel~and \skeahed, a server collusion only increase the leakage as long as the number of corrupted clients is smaller than $t$. For example in \skeahel, the adversary can learn the order of the inputs whose comparison bits are final decrypted by a corrupted client in Step \ref{Client_Rec_Comp_Res_AHE1}. In \skeshe, the $\kre$ is homomorphically computed by the server such that the clients are only required for the decryption of one ciphertext encrypting the $\kre$. Moreover, the ciphertexts are encrypted using the threshold public key. As a result, assuming semi-honest adversary and a collusion set containing less than $t$ clients, a server collusion leaks no more information than $k, n, t, \ctxtlen, \secprm, \inputlen$.

\section{Complexity analysis}
\label{complexity_analysis}
In this section, we discuss the complexity of our schemes. 
We will use $\ctxtlen$ and $\secprm$ as length of asymmetric ciphertext and symmetric security parameter. 

\subsection{\skeygc~Protocol}
A GC for the comparison of two $\inputlen$-bit integers consists of $\inputlen$ AND-gates resulting in $4\inputlen$ symmetric ciphertexts \cite{KolesnikovS.2008, KolesnikovSS.2009}. It can be reduced by a factor of 2 using the \textit{halfGate} optimization \cite{ZahurRE.2015} at the cost of performing two cheap symmetric operations (instead of one) during GC evaluation.

We do the analysis for the case where $n$ is odd (the even case is similar). From Lemma \ref{Paired_CMP_Count_By_CLT}, each client generates $(n-1)/2$ GCs resulting in $(n-1)\inputlen$ symmetric operations. 
The computation of encrypted comparison bits (Steps \ref{ygc_Comparison_bit} to \ref{ygc_Comparison_bit1}) and the computation of the KRE's ciphertext require $O(n)$ asymmetric operations to each client. Finally, each client has to decrypt one ciphertext in Step \ref{ygc_revealing_KRE}. As a result, the computation complexity of each client is therefore 
%CC-C
$O((n-1)\inputlen)$ symmetric and $O(2n+1)$ asymmetric operations. 
In communication, this results in $n\ctxtlen$ bits for the asymmetric ciphertexts, $2\inputlen\secprm(n-1)/2$ bits for the GCs and $\inputlen\secprm(n-1)/2$ for the garbled inputs and $n\ctxtlen$ bits for handling the server's leakage. In total each client sends 
%BC-C
$2n\ctxtlen + \frac{3\inputlen\secprm(n-1)}{2}$.  

The server evaluates $n(n-1)/2$ GCs each consisting of $2\inputlen$ symmetric ciphertexts. Computing the rank (Steps \ref{ygc_compute_rank} to \ref{ygc_compute_rank1}) requires $O(n\log n + n)$ operations to the server. Finally, the server evaluates $\log n + n$ asymmetric operations to compute the KRE ciphertext for each client (Steps \ref{ygc_revealing_KRE} to \ref{ygc_revealing_KRE1}). The total computation complexity of the server is 
%CC-S
$O(n(n-1)\inputlen)$ symmetric and $O((n+1)\log n + 2n)$. 
In communication, the server sends $n(n-1)$ asymmetric ciphertexts in Steps \ref{ygc_Comparison_bit} to \ref{ygc_Comparison_bit1}, $n$ asymmetric ciphertexts in Steps \ref{ygc_compute_rank} to \ref{ygc_compute_rank1} and $n$ asymmetric ciphertexts in Steps \ref{ygc_revealing_KRE} to \ref{ygc_revealing_KRE1}. This results in a total of
%BC-S 
$(n^2+n)\ctxtlen$ bits.

\subsection{\skeahel~Protocol}
Each client performs $O(\inputlen)$ operations in Step \ref{ciphertext_upload_AHE1}, $O(n \inputlen t)$ operations in Step \ref{Client_Th_Dec_AHE1}, $O(n \inputlen t)$ operations in Step \ref{Client_Dec_Comp_Res_AHE1}, $O(\log t)$ operations in Step \ref{Client_Rec_Comp_Res_AHE1}, $O(1)$ operations in Step \ref{Client_Send_KRE_AHE1}, eventually $O(1)$ and $O(n)$ operations in Steps \ref{Client_Part_Dec_KRE_AHE1} and \ref{Client_Send_Part_Dec_KRE_AHE1}, and $O(\log t)$ operations in Step \ref{Client_Rec_KRE_AHE1}. This results in a total of 
%CC-C
%$O(\inputlen + nt + nt + \log t + 1 + 1 + n + \log t)$ 
$O(\inputlen + 2n\inputlen t + 2\log t + n + 1)$ 
asymmetric operations.

Each client sends $(2\inputlen+1)\ctxtlen$ bits in Step \ref{ciphertext_upload_AHE1}, $n\inputlen t\ctxtlen$ bits in Step \ref{Client_Th_Dec_AHE1}, $\ctxtlen$ bits in Step \ref{Client_Send_KRE_AHE1}, eventually $n\ctxtlen$ bits in Step \ref{Client_Send_Part_Dec_KRE_AHE1}. This results in a total of 
%BC-C
%$(2\inputlen+1)\ctxtlen + nt\ctxtlen + \ctxtlen + n\ctxtlen$ 
$(2\inputlen + n\inputlen t + n + 2)\ctxtlen$
bits for each client.

The main cryptographic operations of server happen in the evaluation of the Lin-Tzeng protocol in Step \ref{HOM_Compare_Call_AHE1}. The comparison of two values takes $2\inputlen$ asymmetric operations. 
%\inputlen hom additions and \inputlen hom multiplication with constant
As a result the server performs 
%CC-S
$O(2\inputlen n^2)$ 
asymmetric operations for all comparisons.  

The server sends $n^2 \inputlen t\ctxtlen$ bits in Step \ref{TH_Decryption_Request_Call_AHE1} and $(n^2t+1)\ctxtlen$ bits in Step \ref{Server_forward_Th_Dec_AHE1}, $t\ctxtlen$ bits in Step \ref{Server_Send_KRE_Encrypted_AHE1} and $nt\ctxtlen$ bits in Step \ref{Server_Reveal_KRE_AHE1}. This results in a total of 
%BC-S
%$n^2\inputlen t\ctxtlen + (n^2t+1)\ctxtlen + t\ctxtlen + nt\ctxtlen$
$(n^2\inputlen t + n^2t + nt + t + 1)\ctxtlen$ 
bits for the server.
 
\subsection{\skeahed~Protocol}

Since \skeahed~also requires the predicate \paired~  as \skeygc, we do the analysis for the case where $n$ is odd (the even case is similar).

Each client performs $O(\inputlen+1)$ operations in Step \ref{ciphertext_upload_AHE2}, 
$O(\frac{7 \inputlen (n-1)}{2})$ operations in Step \ref{DGK_Compare_Call}, 
%in the comparison O(\inputlen (n-1)/2) for head cases and O(6 \inputlen (n-1)/2) for tail cases
%in total O(7 \inputlen (n-1)/2)
$O(t)$ operations in Step \ref{Client_Th_Dec_AHE2} and $O(\log t)$ in Step \ref{Combiner_Rec_KRE_AHE2}, $O(n)$ operations in Step \ref{combiner_Send_KRE_Ciphertext_AHE2} and $O(1)$ operations in Step \ref{Client_Dec_KRE_AHE2}. This results in a total of 
%CC-C
%$O(\inputlen+1 + \frac{7 \inputlen (n-1)}{2} + t + \log t + n + 1)$ 
$O(\inputlen + \frac{7 \inputlen (n-1)}{2} + t + \log t + n + 1)$ 
asymmetric operations.

Each client sends $(\inputlen+1)\ctxtlen$ bits in Step \ref{ciphertext_upload_AHE2}, $\frac{\ctxtlen(n-1)}{2}$ bits (when the client is head) and $\frac{(\inputlen+1)\ctxtlen(n-1)}{2}$ (when the client is tail) in Step \ref{DGK_Compare_Call}, $t\ctxtlen$ bits in Step \ref{Client_Th_Dec_AHE2} and $n\ctxtlen$ bits in Step \ref{combiner_Send_KRE_Ciphertext_AHE2}. This results in a total of 
$(\inputlen\frac{(n+1)}{2} + 2n + t)\ctxtlen$
bits for each client.

The cryptographic operations of the server happen in \textsc{ComputeKreAhe} (Algorithm \ref{Compute_Rank}) that is called in  Step \ref{Call_Compute_KRE} of Protocol \ref{AHE2_KRE_Protocol}. The server performs 
%CC-S
$O(n^2+n)$ 
asymmetric operations.

The server sends $\frac{(\inputlen\ctxtlen + (\inputlen+1)\ctxtlen)n(n-1)}{2}$ bits in Step \ref{DGK_Compare_Call}, $nt\ctxtlen$ bits in Steps \ref{Decryption_Request_Call_AHE2} and \ref{Forward_to_combiner}, $n\ctxtlen$ bits in Step \ref{Server_Reveal_KRE_AHE2}. This results in a total of 
%BC-S
%$\frac{(\inputlen\ctxtlen + (\inputlen+1)\ctxtlen)n(n-1)}{2} + 2nt\ctxtlen + n\ctxtlen$ 
$(\frac{(2\inputlen+1)n(n-1)}{2} + 2nt + n)\ctxtlen$ 
bits for the server.

\subsection{\skeshe~Protocol}
Each client has $O(\inputlen)$
%CC-C
computation cost ($\inputlen + 1$ encryptions in Step \ref{ciphertext_upload_TH_FHE} and eventually one partial decryption in Step \ref{Client_Send_KRE_Share_TH_FHE}) and a communication cost of 
%BC-C
$(\inputlen+n+1)\ctxtlen$ bits.

The cryptographic operations of the server happen in \textsc{ComputeKreShe} (Algorithm \ref{FHE_KRE}) that is called in  Step \ref{Server_Compute_KRE_TH_FHE} of Protocol \ref{FHE_KRE_Protocol}. The SHE comparison circuit has depth $\log (\inputlen-1)+1$ and requires $O(\inputlen\log\inputlen)$ homomorphic multiplications \cite{CheonKL15, CheonKK16}. For all comparisons the server performs, therefore, $O(n^2\inputlen\log\inputlen)$ multiplication. In Step \ref{compute_f_r} of Algorithm \ref{FHE_KRE}, the computation of $\lsem \prod_{j=1, j\neq k}^{n}(r_i - j) \rsem$ has depth $\log n$ and requires $O(n\log n)$ homomorphic multiplications. Step \ref{inner_product} of Algorithm \ref{FHE_KRE} adds an additional circuit depth and requires $O(n)$ homomorphic multiplications. As a result, Algorithm \ref{FHE_KRE}  has a total depth of $\log (\inputlen-1) + \log n + 2$ and requires 
%CC-S
$O(n^2\inputlen\log\inputlen + n\log n + n)$ 
homomorphic multiplications.

The server sends $t\ctxtlen$ bits in Step \ref{Server_Send_KRE_Encrypted_TH_FHE} and $nt\ctxtlen$ bits in Step \ref{Server_Reveal_KRE_TH_FHE} resulting in a total of 
%BC-S
%$t\ctxtlen + nt\ctxtlen$ 
$(t + nt)\ctxtlen$ 
bits.

\end{document}